\def\BibTeX{{\rm B\kern-.05em{\sc i\kern-.025em b}\kern-.08em
    T\kern-.1667em\lower.7ex\hbox{E}\kern-.125emX}}
\definecolor{LightCyan}{rgb}{1,0.8,0.8}
\newtheorem{theorem}{Theorem}
\newtheorem{corollary}{Corollary}
\let\NAT@parse\undefined
\newcommand\copyrighttext{%
  \footnotesize \textcopyright This paper has been accepted for publication in the IEEE International Conference on Robotics and Automation. Please cite the paper as: E. Sebasti\'{a}n and E. Montijano,``Multi-robot Implicit Control of Herds'', IEEE International Conference on Robotics and Automation (ICRA), 1601-1607, 2022.}
\newcommand\copyrightnotice{%
\begin{tikzpicture}[remember picture,overlay]
\node[anchor=south,yshift=10pt] at (current page.south) {\fbox{\parbox{\dimexpr\textwidth-\fboxsep-\fboxrule\relax}{\copyrighttext}}};
\end{tikzpicture}%
}
\title{\LARGE\bf Multi-robot Implicit Control of Herds}
\author{\centering Eduardo Sebasti\'{a}n and Eduardo Montijano
\thanks{E. Sebasti\'{a}n and E. Montijano are with the RoPeRt group, at DIIS - I3A, Universidad de Zaragoza, Spain.
\texttt{\small \{esebastian, emonti\}@unizar.es}}
\thanks{This work has been supported by the ONR Global
grant N62909-19-1-2027, the Spanish projects PGC2018-098817-A-I00 and PGC2018-098719-B-I00 (MCIU/AEI/FEDER, UE), DGA T04-FSE, and Spanish grant FPU19-05700.}
}
\begin{document}
\maketitle

\copyrightnotice


\begin{abstract}
This paper presents a novel control strategy to herd a group of non-cooperative evaders by means of a team of robotic herders. In herding problems, the motion of the evaders is typically determined by strong nonlinear reactive dynamics, escaping from the herders. Many applications demand the herding of numerous and/or heterogeneous entities, making the development of flexible control solutions challenging. In this context, our main contribution is a control approach that finds suitable herding actions even when the nonlinearities in the evaders' dynamics yield to implicit equations. We resort to numerical analysis theory to characterise the existence conditions of such actions and propose two design methods to compute them, one transforming the continuous time implicit system into an expanded explicit system, and the other applying a numerical method to find the action in discrete time. Simulations and real experiments validate the proposal in different scenarios.
\end{abstract}


\section{Introduction}\label{sec:intro}
Recent advances in Multi-Robot Systems (MRS) have favoured the development of successful control strategies in real-life problems such as entrapment~\cite{Antonelli_RAM_2008_Entrapment}, hunting~\cite{Zhu_IJARS_2015_Hunting} or escorting~\cite{Gao_ACCESS_2018_Escorting}. Despite the different nature of scenarios, these problems can be gathered as herding~\cite{Pierson_2018_TR_Herding}, where the objective is to drive a group of targets to specific locations using a team of robots. A common denominator is the non-cooperative nature of the targets with respect to the control objective, typically entangled in complex nonlinear behaviours. Indeed, the difficulties hidden in the herding problem have motivated broader interdisciplinary research gathering physiologists, mathematicians and neurologists with engineers~\cite{Strombom2014Shepherding}~\cite{Long2020Comprehensive}.  
To cope with this, we present a control solution that, relying on numerical analysis, solves a set of implicit equations and drives the herd towards the goal. Our method works with different motion models and team configurations.

\begin{figure}[!ht]
\centering
\begin{tabular}{cc}
     {\footnotesize (a)}
     &
     {\footnotesize (b)}
     \\\hspace{0.1cm}
     \includegraphics[width=0.41\columnwidth,height=0.24\columnwidth]{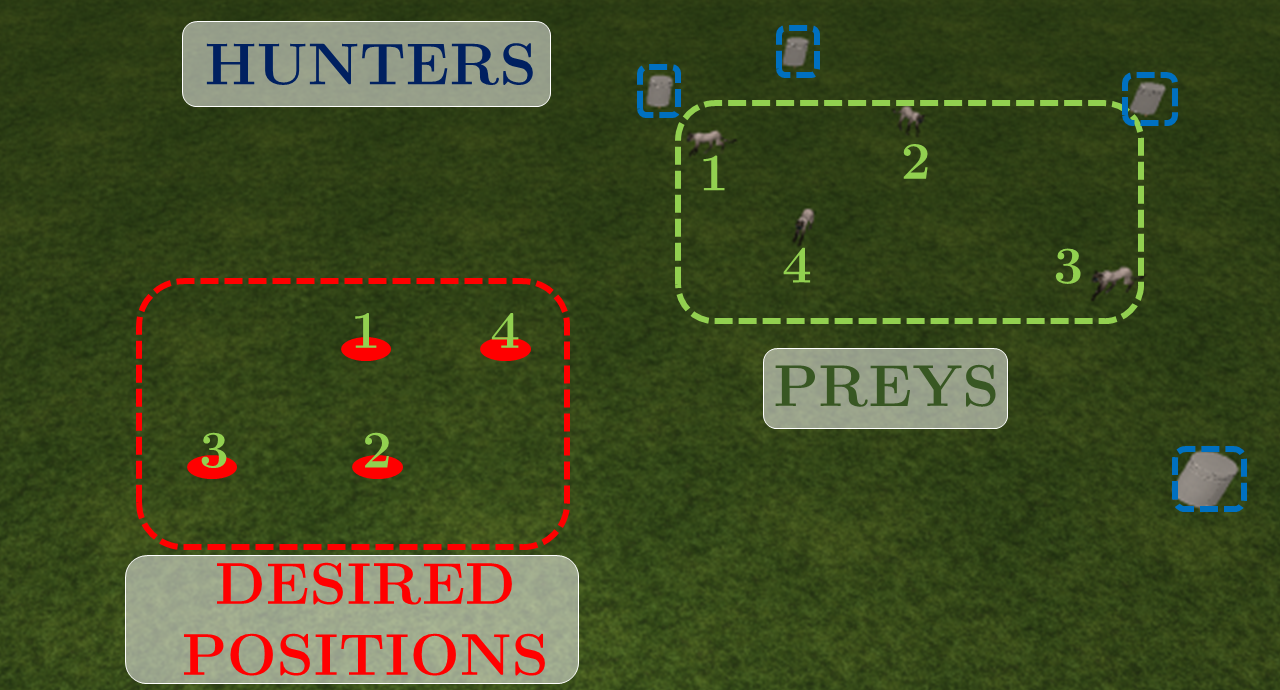}\hspace{0.1cm}
     &  \hspace{0.1cm}
     \includegraphics[width=0.41\columnwidth,height=0.24\columnwidth]{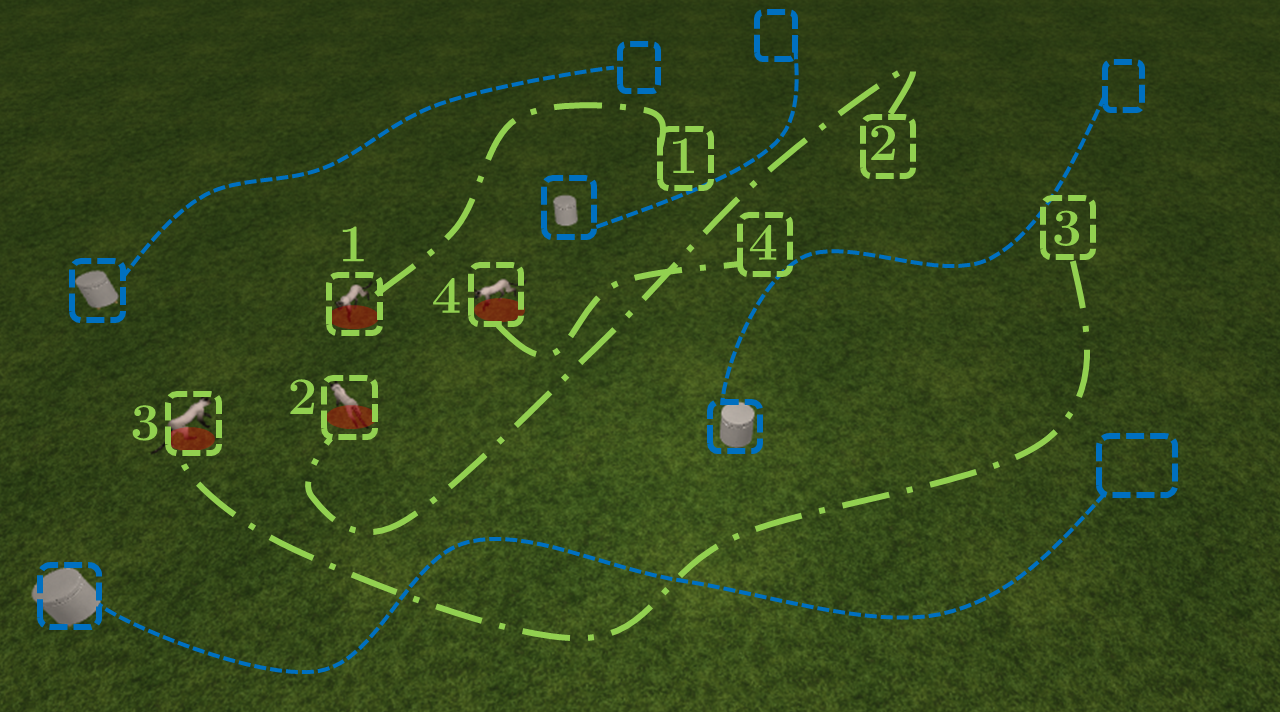}\hspace{0.1cm}
     \\
     {\footnotesize (c)}
     &
     {\footnotesize (d)}
     \\\hspace{0.1cm}
     \includegraphics[width=0.41\columnwidth,height=0.24\columnwidth]{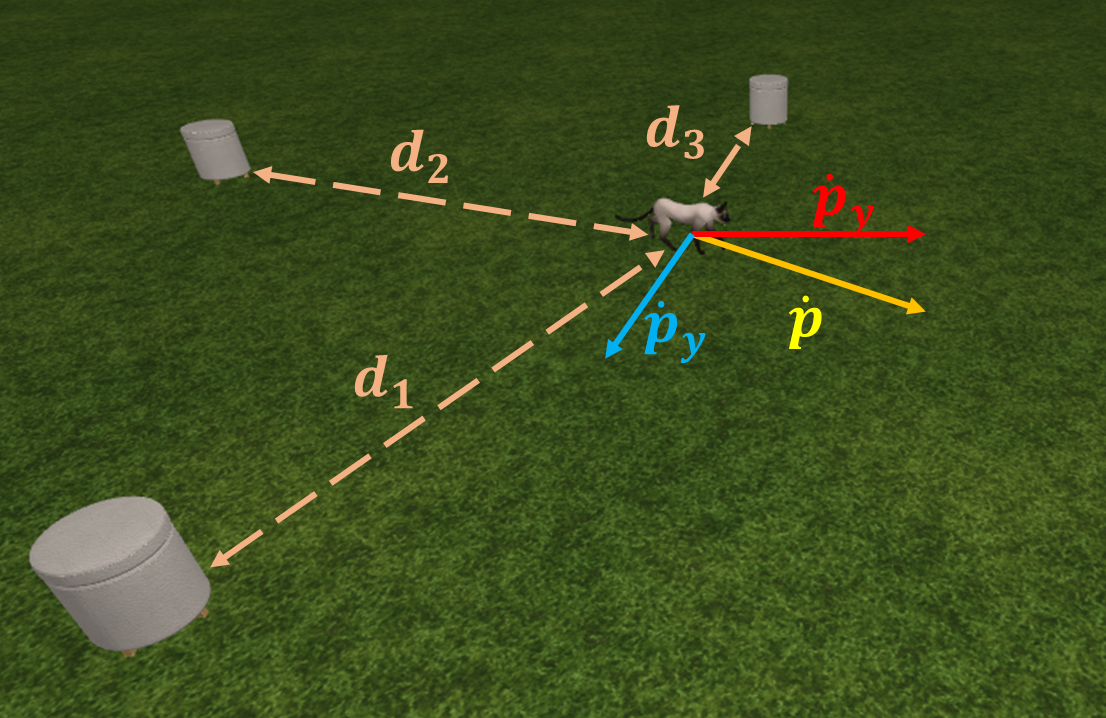}\hspace{0.1cm}
     & \hspace{0.1cm}
     \includegraphics[width=0.41\columnwidth,height=0.24\columnwidth]{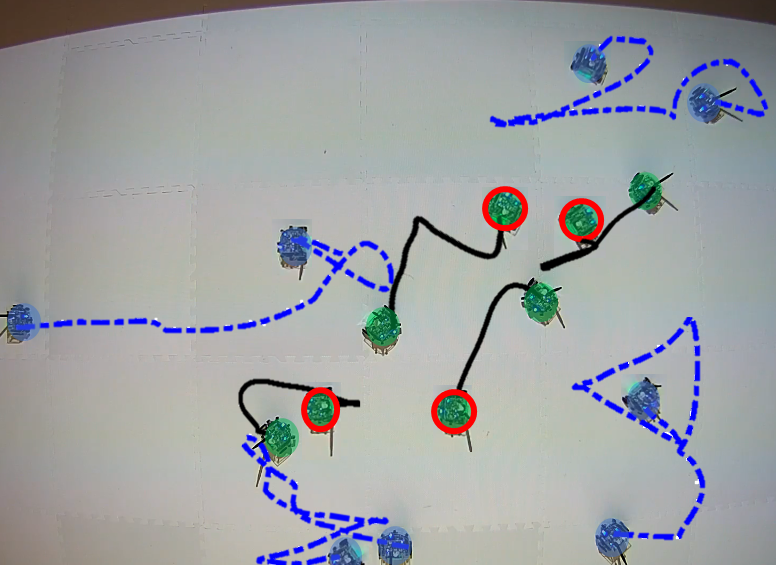}\hspace{0.1cm}
\end{tabular}
    
\caption{In the herding (a), the robotic herders (blue) drive the evaders (green) to specific locations (red) simultaneously. Once the herders are near the evaders and they are assigned to a desired location (b), the herding starts. The control strategy leverages the repulsive interaction forces (c) to locate the evaders in their corresponding desired places. The view in (d) visualise an experiment where four robots herd four evaders, showing the initial and final positions, desired positions with red contour, and trajectories of the players. A video of the experiments is included as supplementary material.}
\label{fig:first_impression}
\end{figure}

In the literature, most of the works deal with the herding of a single entity. An early example is~\cite{Antonelli_RAM_2008_Entrapment}, which employs a Null space-based behavioral control to escort/entrap the target. Meanwhile,~\cite{Monti_CDC_2013_Entrapment} deals with the uncertainty in the position of the target by adopting an elliptical orbit around it. In a different fashion, the authors of~\cite{Zhu_IJARS_2015_Hunting} apply a Bio-inspired Neural Network to hunt in an underwater environment. Conversely, the approach in~\cite{Belkhouche_ICSMC_2005_Hunting} relies on geometric rules in a land-based hunting. Another example is~\cite{Gao_ACCESS_2018_Escorting}, where a vector-field based controller escorts an objective. From this point of view, only a few papers solve the entrapment of more than one target. In~\cite{Jahn_ICRA_2017_Surveillance}, a group of robots navigates around a certain area to avoid targets to cross its boundaries. The work in~\cite{Pierson_2018_TR_Herding} drives groups of entities by an active encirclement but does not consider specific final positions for each evader. Following a similar approach, the authors in~\cite{Panagou_CDC_2019_Herding} go a step further and propose an active encirclement which avoids obstacles. In~\cite{Licitra_CSL_2018_Herding}, a single robot is in charge of herding multiple targets controlling them one by one. In contrast, our method performs the herding of any number of targets to precise individual locations simultaneously.

Another feature of herding problems is the behaviour of the targets, where two assumptions are often considered: linear and homogeneous dynamics. The first assumption is considered in, e.g.,~\cite{Jiang_TAC_2019_Containment}, where the containment of linear heterogeneous agents is performed by a time-varying formation. The authors in~\cite{Scott_ACC_2013_PHE} apply control theory in a problem with three linear entities: a pursuer, a herder and an evader. The work in~\cite{Anisi_TAC_2010_Surveillance} drives a group of linear UGVs to ensure that a region is completely surveyed whilst in~\cite{Ramana_CDC_2015_PE} both pursuers and evader have the same linear dynamics, solving the problem with geometric tools. The second assumption is used in~\cite{Pierson_2018_TR_Herding}, where nonlinear homogeneous repulsive dynamics are controlled using a team of robots. A different instance is~\cite{Alexopoulos_ICUAS_2017_PE}, where a complete control structure is presented, from the hex-rotor motion to the pursuit layer. Our proposal offers a general framework to design a control strategy for the herding of general nonlinear and heterogeneous targets.

Summarising, the main contribution of this paper is a novel control strategy for the herding in MRS. In contrast to the state of the art, our approach is valid for any number of targets, as well as heterogeneous dynamics under two mild assumptions. The first assumption is that the maximum velocity of herders and evaders is equal, making the herding even more challenging. Derived from this first assumption, an initial approaching phase is supposed to be done to surround the evaders, to avoid their evasion before beginning the herding. The second is that a target assignment is already done to assign desired positions to evaders. We leverage numerical analysis theory to find the herding action when, due to complex nonlinearities, the control is defined by a set of implicit equations. Two design procedures are derived, an explicit version that expands the system with appropriate action dynamics, and an implicit version which leverages existence and smoothness properties to solve the action with a numerical method. Simulated and real experiments demonstrate the success of the solution, using two state-of-the-art dynamic models for the evaders to validate the flexibility of the proposal.

The rest of the paper is arranged as follows. Section~\ref{sec:prosta} establishes the bases of the problem. Section~\ref{sec:IC} develops the methodologies to calculate the control action. Section~\ref{sec:simulations} discusses the simulation results.
The success of the proposal is corroborated with real experiments in Section~\ref{sec:experiments}. Finally, Section~\ref{sec:conclusion} contains the conclusions.


\section{Problem Statement}\label{sec:prosta}
We consider the problem of herding a group of $m$ evaders using a team of $n$ robotic herders. We denote evaders by $\textit{j} \in \{ 1, 2,  ... , m \},$ herders by $\textit{i} \in \{1, 2, ... , n\}$, and we define the state $\mathbf{x} \in X \subseteq \mathbb{R}^{2m}$ and the action $\mathbf{u} \in U \subseteq \mathbb{R}^{2n}$ as
\begin{align*}
\mathbf{x} = \begin{bmatrix}
    \mathbf{p}_{1}^{T} & \mathbf{p}_{2}^{T} & \dots  & \mathbf{p}_{m}^{T}
    \end{bmatrix}^{T},
& \:
\mathbf{u} = \begin{bmatrix}
    \mathbf{h}_{1}^{T} & \mathbf{h}_{2}^{T} & \dots  & \mathbf{h}_{n}^{T}
    \end{bmatrix}^{T}, 
\end{align*}
where $\mathbf{p}_{j} \in \mathbb{R}^{2}$ is the position of evader $j$ and $\mathbf{h}_{i} \in \mathbb{R}^{2}$ is the position of herder $i$, operating in a 2D space\footnote{The choice of a 2D space is to fit the real experiments, but the solution can be generalised to higher dimensions without changes in the formulation.}. Then, we define the movement of each evader by
\begin{equation}
\dot{\mathbf{p}}_j = f_j(\mathbf{x},\mathbf{u}) 
\label{eq:base}
\end{equation}
allowing for any nonlinear behaviour encoded in $f_j(\mathbf{x},\mathbf{u})$. 
The only assumption regarding $f_j$ is that it is continuous and derivable, i.e., of class $C^1$, for all $j$. To ease the intuition and to assess the proposal against heterogeneous evaders, we exemplify Eq.~\eqref{eq:base} using 
two dynamic models from the literature.
Their choice is motivated by the strong nonlinear behaviour in the position of evaders and herders.

The \textit{Inverse Model} (extracted from \cite{Pierson_2018_TR_Herding}) is given by
\begin{equation}
\dot{\mathbf{p}}_j = f_j^{inv}(\mathbf{x},\mathbf{u}) = \gamma_j\sum_{i=1}^{n} \frac{\mathbf{d}_{ij}}{||\mathbf{d}_{ij}||^{3}},
\label{eq:PiersonBase}
\end{equation}
where $\mathbf{d}_{ij} = \mathbf{p}_{j} - \mathbf{h}_{i}$ is the relative position between evader $j$ and herder $i$ and $\gamma_j$ is a positive constant which can express different phenomena such illness, loss of energy, etc.

We note that despite the model has a singularity in $\mathbf{d}_{ij}=\mathbf{0}$, the speed remains bounded. The repulsion grows with $\frac{1}{||\mathbf{d}_{ij}||^{3}}$ so the closer the herders, the larger the repulsion. Since the maximum speed of the players is equal, the only way of achieving $\mathbf{d}_{ij}=\mathbf{0}$ is that an evenly spaced number of herders approaches the evader with symmetric trajectories such that the sum of the repulsion forces is $\mathbf{0}$.

The \textit{Exponential Model} (extracted from \cite{Licitra_CSL_2018_Herding}) is given by  
\begin{equation}
\label{eq:LicitraBase}
\dot{\mathbf{p}}_j = f_j^{exp}(\mathbf{x},\mathbf{u}) = \left\{
\begin{array}{ll}
\displaystyle\beta_j\alpha_j\sum_{i=1}^{n}\mathbf{d}_{ij}e^{-\chi_{ij}},&\hbox{ if } ||\mathbf{d}_{ij}|| > r 
\\
\displaystyle\alpha_j\sum_{i=1}^{n}\mathbf{d}_{ij}e^{-\chi_{ij}},&\hbox{ otherwise,}
\end{array}
\right.
\end{equation}
where $\chi_{ij} = \frac{1}{\sigma^2}\mathbf{d}_{ij}^{T}\mathbf{d}_{ij}$ and $\sigma_j>1$. In this model there is a switching condition if $||\mathbf{d}_{ij}|| \leq r$, where the evader $j$ becomes ``scared'' and the intensity of the repulsive interaction increases, due to $0<\beta_j<1$. Besides, $\alpha_j>0$ works as $\gamma_j$ does in the Inverse Model.

The joint system dynamics of Eq.~\eqref{eq:base} can be defined as
\begin{equation}
\dot{\mathbf{x}} = f(\mathbf{x},\mathbf{u})
\label{eq:initial_system},
\end{equation}
where $f(\mathbf{x},\mathbf{u})$ simply comes from stacking all $f_j(\mathbf{x},\mathbf{u})$.
This formulation also allows to consider heterogeneous herds, with different number of evaders and motion models. 

Our goal is to herd the evaders to a desired set of positions $\mathbf{x}^{*}\in X$ simultaneously. 
To do this, we define the position error of the evaders as $\widetilde{\mathbf{x}} = \mathbf{x} - \mathbf{x}^{*}$ and we set the control objective as to find a control strategy to drive $\widetilde{\mathbf{x}}$ to zero. 

It is noteworthy that the reactive behaviour of the evaders is with respect to the position of the herders. Therefore, a control strategy which determines actions in terms of position of the herders is adequate to generalise the solution to different robotic platforms. 
Besides, in this problem we consider that full state feedback is available.

We do two considerations before ending this Section. To keep the generality of the solution, in this work we assume that the maximum velocity of herders and evaders is equal. This requires an initial approaching phase, surrounding the evaders to avoid their escape before the precision herding begins. 
It is not mandatory to achieve a compact and closed encirclement to succeed but just the distribution of the herders near the evaders. Examples of approximation and encirclement that could be used are~\cite{Antonelli_RAM_2008_Entrapment,Monti_CDC_2013_Entrapment,Panagou_CDC_2019_Herding} or \cite{Zhang2018Swarm}. Similarly, a simple task assignment associates each evader with its corresponding desired location, e.g., with the Hungarian algorithm, reducing this way the chances of collision caused by crossing paths. 


\section{Implicit Control}\label{sec:IC}
The herding seeks an expression for the position of the herders, i.e., for the action $\mathbf{u}$, such that the evaders go to their assigned desired positions, $\mathbf{x}^*.$
Besides, the process may need to accomplish some other requirements, such as a desired transient response. These objectives can be translated into designing $\mathbf{u}$ such that the evaders follow desired dynamics $f^*$ which fulfil the requirements,
\begin{equation}\label{eq:def_f_star}
    \dot{\mathbf{x}} = f^*(\mathbf{x}).
\end{equation}

Instead of looking for a closed-form expression for the action, $\mathbf{u}=g(\mathbf{x})$, we propose to compute the action by straight substitution of the evader dynamics for the desired ones, namely, find the value of $\mathbf{u}$ that transforms~\eqref{eq:initial_system} into~\eqref{eq:def_f_star}.
However, trying to do this in complex nonlinear models such the ones described in Section~\ref{sec:prosta} results in systems of implicit equations,
where finding a closed-form solution is not possible nor analytically tractable.
To overcome this, in this Section we study the conditions that allow to find $\mathbf{u}$ such that the evaders evolve according to $f^*$. Then, we propose two design procedures to solve the control action. The description is kept in general control terms since we believe that this procedure can be of interest in other control problems.

Firstly, it is necessary to address the question of whether a smooth action that makes the actual dynamics equal to the desired ones exists or not.
To do this, we define
\begin{equation}\label{eq:def_h}
    h(\mathbf{x},\mathbf{u}) = f(\mathbf{x}, \mathbf{u}) - f^*(\mathbf{x}),
\end{equation}
changing the framework to that of computing the roots of $h(\mathbf{x},\mathbf{u})$ with respect to $\mathbf{u}$.
With this change we can use the Implicit Function Theorem
to formally characterise sufficient conditions for existence and smoothness.
\begin{theorem}[Adapted from Theorem 9.28 of~\cite{Rudin1976Mates}]
\label{theorem:exist_and_smooth}
Let 
\begin{equation}\label{eq:formal_h}
    h: I = X \times U \subset\mathbb{R}^{n}\times \mathbb{R}^{m}\longmapsto\mathbb{R}^{m}
\end{equation}
a $C^1$-mapping, such that $h(\mathbf{x}^*,\mathbf{u}_0^{*})=\mathbf{0}$ for some point $\mathbf{u}_0^{*} \in U$. Additionally, consider the Jacobian
\begin{equation}\label{eq:J_in_depth}
    \mathbf{J} = \left(\mathbf{J}_{\mathbf{x}} | \mathbf{J}_{\mathbf{u}})\right. = 
    \left.\begin{pmatrix}
    \frac{\partial h_1}{\partial \mathbf{x}_1} & \hdots & \frac{\partial h_1}{\partial \mathbf{x}_m}
    & 
    \frac{\partial h_1}{\partial \mathbf{u}_1} & \hdots & \frac{\partial h_1}{\partial \mathbf{u}_n}
    \\
    \vdots & \ddots & \vdots & \vdots & \ddots & \vdots
    \\
    \frac{\partial h_m}{\partial \mathbf{x}_1} & \hdots & \frac{\partial h_m}{\partial \mathbf{x}_m}
    &
    \frac{\partial h_m}{\partial \mathbf{u}_1} & \hdots & \frac{\partial h_m}{\partial \mathbf{u}_n}
    \end{pmatrix}\right..
\end{equation}
such that $\mathbf{J}_{\mathbf{u}}$ is non-singular in the point $(\mathbf{x}^*,\mathbf{u}_0^{*})$. Then, there exist open subsets $I^*\subset\mathbb{R}^{n}\times \mathbb{R}^{m}$ and $X^*\subset\mathbb{R}^{n}$, with  $(\mathbf{x},\mathbf{u}^{*}) \in I^*$ and $\mathbf{x}\in X^*$, having the following property: to every possible $\mathbf{x} \in X^*$ corresponds a unique $\mathbf{u}^*$ such that $(\mathbf{x},\mathbf{u}^{*}) \in I^*$ and $h(\mathbf{x},\mathbf{u}^*)=\mathbf{0}$.
\end{theorem}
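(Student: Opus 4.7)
The plan is to reduce the statement to the classical Inverse Function Theorem, closely following Rudin's strategy for Theorem 9.28. First I would introduce the auxiliary map $F$ on $I = X \times U$ defined by
\[ F(\mathbf{x}, \mathbf{u}) = (\mathbf{x},\, h(\mathbf{x}, \mathbf{u})), \]
whose domain and codomain share the same total dimension. Since $h$ is of class $C^1$, so is $F$, and its total Jacobian at $(\mathbf{x}^*, \mathbf{u}_0^*)$ has the block-triangular form
\[ DF(\mathbf{x}^*, \mathbf{u}_0^*) = \begin{pmatrix} \mathbf{I} & \mathbf{0} \\ \mathbf{J}_{\mathbf{x}} & \mathbf{J}_{\mathbf{u}} \end{pmatrix}, \]
whose determinant equals $\det(\mathbf{J}_{\mathbf{u}})$ and is non-zero by hypothesis.

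The second step is to invoke the Inverse Function Theorem on $F$: since $DF$ is non-singular at the base point, there exist open neighbourhoods $I^* \subset I$ of $(\mathbf{x}^*, \mathbf{u}_0^*)$ and $V$ of $F(\mathbf{x}^*, \mathbf{u}_0^*) = (\mathbf{x}^*, \mathbf{0})$ on which $F$ restricts to a $C^1$-diffeomorphism with $C^1$ inverse $F^{-1}: V \to I^*$. Because $F$ preserves the first coordinate, $V$ can be shrunk to a product form $X^* \times W$ with $\mathbf{0} \in W$. For each $\mathbf{x} \in X^*$, the point $(\mathbf{x}, \mathbf{0}) \in V$ then has a unique preimage $(\mathbf{x}, \mathbf{u}^*) \in I^*$ under $F$; setting $g(\mathbf{x}) := \mathbf{u}^*$ and unpacking the definition of $F$ yields $h(\mathbf{x}, g(\mathbf{x})) = \mathbf{0}$. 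Uniqueness of $\mathbf{u}^*$ inside $I^*$ and the $C^1$-smoothness of $g$ both descend directly from the injectivity and regularity of $F^{-1}$.

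The point requiring the most care is the dimension bookkeeping: the Inverse Function Theorem demands a square differential, which in turn forces $\mathbf{J}_{\mathbf{u}}$ itself to be square, so $h$ must map into a space of dimension equal to $\dim \mathbf{u}$. This is consistent with the hypothesis $h: I \to \mathbb{R}^m$ paired with $\mathbf{u} \in U \subset \mathbb{R}^m$. I would also make explicit the standard shrinking argument that lets us take the neighbourhood produced by the Inverse Function Theorem of product form around the base point, so that the implicit solution $g$ is genuinely defined on an open set $X^* \subset \mathbb{R}^n$. Beyond these details, the argument is essentially a transcription of Rudin's proof into the control-theoretic notation adopted in the paper, so I do not anticipate any deeper conceptual obstacle.
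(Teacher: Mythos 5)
Your argument is correct and is precisely the route the paper intends: the paper gives no proof of its own but defers to Rudin's Theorem 9.28, whose proof is exactly your reduction of the implicit function statement to the Inverse Function Theorem via the auxiliary map $F(\mathbf{x},\mathbf{u})=(\mathbf{x},h(\mathbf{x},\mathbf{u}))$ with block-triangular differential. Your remark on the dimension bookkeeping (that $\mathbf{J}_{\mathbf{u}}$ must be square for the theorem as stated, with the non-square case handled separately via $\mathbf{J}_{\mathbf{u}}^T\mathbf{J}_{\mathbf{u}}$) matches the paper's own discussion following the theorem.
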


The Theorem imposes three conditions to be fulfilled.
Firstly, there must exist an action, $\mathbf{u}_0^{*},$ which solves the control in $\mathbf{x}^{*}$. In the herding context, there must exist a stable configuration of the herders when the evaders are in their desired positions. A sufficient condition to ensure this, for all $\mathbf{x} \in X$, is to let $n \geq m,$ since both evaders and herders are first order entities in the space.
Nevertheless, there might be configurations for which a fewer number of herders is enough.

Secondly, $h$ must be of class $C^1$ in $(\mathbf{x}^*,\mathbf{u}_0^{*})$. If $f^*$ is chosen of class $C^1$ in $\mathbf{x}^*$, then, this condition is accomplished because $f_j$ in~\eqref{eq:base} is of class $C^1$ for all $j$ by assumption, so $f$ in~\eqref{eq:initial_system} is also of class $C^1$.

The last condition requires the Jacobian of $h$ with respect to $\mathbf{u}$, $\mathbf{J}_{\mathbf{u}}$, to be non-singular in the desired location. Since for $m\neq n$ the matrix is not square, we generally consider the matrix $\mathbf{J}_{\mathbf{u}}^T \mathbf{J}_{\mathbf{u}}$ as the one to be non-singular. Given the aforementioned features of $h$ and $\mathbf{J}_{\mathbf{u}}$, the last condition is accomplished in $\mathbf{x}^*$. Moreover, restricting $I$ to the subspace without collisions we ensure that the two last conditions of Theorem~\ref{theorem:exist_and_smooth} hold for all $\mathbf{x}$ and, therefore, the Theorem holds for all $\mathbf{x}$ in this subspace.

Considering that each herder provokes a repulsive reaction in every evader, collision of herders and evaders will not happen in practice. Similarly, given that we are controlling the herders, it is easy to prevent collisions among them.

\begin{corollary}\label{corollary:implicit_theorem}
If the conditions in Theorem~\ref{theorem:exist_and_smooth} hold $\forall \mathbf{x}\in X$, the existence and smoothness of control action is globally guaranteed in $I$.
\end{corollary}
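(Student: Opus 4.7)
The plan is to lift the local conclusion of Theorem~\ref{theorem:exist_and_smooth} to a global one by a straightforward covering argument, exploiting the uniqueness clause in the theorem. First I would fix an arbitrary $\mathbf{x}_0\in X$. By assumption the three hypotheses of Theorem~\ref{theorem:exist_and_smooth} hold at $\mathbf{x}_0$: there exists some admissible action $\mathbf{u}_0^{*}\in U$ with $h(\mathbf{x}_0,\mathbf{u}_0^{*})=\mathbf{0}$, the map $h$ is of class $C^1$ at $(\mathbf{x}_0,\mathbf{u}_0^{*})$, and $\mathbf{J}_{\mathbf{u}}$ (or $\mathbf{J}_{\mathbf{u}}^{T}\mathbf{J}_{\mathbf{u}}$ when $n\neq m$) is non-singular there. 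Applying the theorem produces an open neighborhood $X^{*}(\mathbf{x}_0)\subset X$ of $\mathbf{x}_0$ and a unique $C^1$ map $\mathbf{u}^{*}:X^{*}(\mathbf{x}_0)\to U$ with $h(\mathbf{x},\mathbf{u}^{*}(\mathbf{x}))=\mathbf{0}$ and $(\mathbf{x},\mathbf{u}^{*}(\mathbf{x}))\in I$ on $X^{*}(\mathbf{x}_0)$.

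Next I would collect these neighborhoods into an open cover $\{X^{*}(\mathbf{x}_0)\}_{\mathbf{x}_0\in X}$ of $X$ and check that the local solutions agree on overlaps. If $\mathbf{x}\in X^{*}(\mathbf{x}_0)\cap X^{*}(\mathbf{x}_1)$, both $\mathbf{u}^{*}_{\mathbf{x}_0}(\mathbf{x})$ and $\mathbf{u}^{*}_{\mathbf{x}_1}(\mathbf{x})$ are roots of $h(\mathbf{x},\cdot)$ lying in a neighborhood where the local uniqueness granted by Theorem~\ref{theorem:exist_and_smooth} applies, so they must coincide. This consistency allows the local branches to be glued into a single well-defined map $\mathbf{u}^{*}:X\to U$ satisfying $h(\mathbf{x},\mathbf{u}^{*}(\mathbf{x}))=\mathbf{0}$ and $(\mathbf{x},\mathbf{u}^{*}(\mathbf{x}))\in I$ for every $\mathbf{x}\in X$, which is precisely the global existence asserted by the corollary.

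Finally, smoothness is inherited pointwise: at any $\mathbf{x}\in X$ the global solution coincides with a $C^1$ local branch on an open neighborhood, so the global $\mathbf{u}^{*}$ is itself $C^1$ on all of $X$. The main obstacle in this argument is the gluing step, because the Implicit Function Theorem only produces locally unique branches; if $X$ were disconnected or if the physical setup allowed multiple admissible actions at the same state, one would have to specify how a particular branch is selected. In the herding setting this is not a real issue, since the collision-free restriction on $I$ discussed before the corollary already removes the configurations where $\mathbf{J}_{\mathbf{u}}^{T}\mathbf{J}_{\mathbf{u}}$ could degenerate, so the local uniqueness propagates consistently across $X$ and the corollary follows.
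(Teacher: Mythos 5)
Your first and last paragraphs already contain everything the corollary needs, and they coincide with the paper's (essentially one-line) reasoning: the paper simply observes that once the hypotheses of Theorem~\ref{theorem:exist_and_smooth} are verified at every $\mathbf{x}\in X$ (the $C^1$ property and non-singularity of $\mathbf{J}_{\mathbf{u}}^{T}\mathbf{J}_{\mathbf{u}}$ after restricting $I$ to the collision-free subspace), the theorem's local existence and smoothness conclusion applies at every point, which is all that ``globally guaranteed in $I$'' asserts. So for the statement as written, the pointwise application plus the remark that smoothness is a local property is a complete proof, and it matches the paper's approach.

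The middle step of your argument, however, tries to prove something strictly stronger than the corollary claims --- a single-valued, globally defined $C^1$ selection $\mathbf{u}^{*}:X\to U$ --- and the overlap-consistency argument there does not go through. The uniqueness clause of the Implicit Function Theorem is confined to the product neighborhood $I^{*}$ around the particular base point $(\mathbf{x}_0,\mathbf{u}_0^{*})$; it says nothing about roots of $h(\mathbf{x},\cdot)$ lying outside that neighborhood in the $\mathbf{u}$-direction. Two branches constructed around different base points can therefore assign distinct roots to the same $\mathbf{x}$ without contradicting local uniqueness (the scalar example $h(x,u)=u^{2}-x$ with its two branches $\pm\sqrt{x}$, each with non-vanishing $\partial h/\partial u$, shows that everywhere-non-singular Jacobians do not force a unique root). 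The collision-free restriction on $I$ removes degeneracies of $\mathbf{J}_{\mathbf{u}}^{T}\mathbf{J}_{\mathbf{u}}$ but does not remove multiplicity of roots, so it cannot rescue the gluing. Fortunately none of this is needed: drop the gluing paragraph and the proof of the corollary as stated is correct and identical in substance to the paper's.
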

In the next Subsections we present two methods to compute the action $\mathbf{u}$ that solves $h$, i.e., which imposes the desired behaviour of the evaders.

\subsection{Explicit Design}\label{subsec:explicit_design}

The Explicit Design method consists in expanding the initial system in~\eqref{eq:initial_system} with action dynamics that converge to the roots of $h$. This transforms the problem to that of computing the action $\mathbf{u}$ as part of an expanded explicit system, described in continuous time and with analytical solution.

To do this, we propose a design on the action dynamics
\begin{equation}
\label{eq:u_dynamics}
\kern -4pt
    \dot{\mathbf{u}} = \mathbf{J}_{\mathbf{u}}^{T} \mathbf{J}_{\mathbf{u}}^{+}
    \kern -3pt\left(h^*(\mathbf{x},\mathbf{u})-\mathbf{J}_{\mathbf{x}} f(\mathbf{x}, \mathbf{u}) \right),
\end{equation}
where $h^*$ is a free design parameter that encodes the desired closed-loop dynamic behaviour of $h$,
$\mathbf{J}_{\mathbf{u}}^{+}  = (\mathbf{J}_{\mathbf{u}}^T \mathbf{J}_{\mathbf{u}})^{-1}$
and $\mathbf{J}_{\mathbf{x}}$, $\mathbf{J}_{\mathbf{u}}$ have dimensions $2m \times 2m$ and $2m \times 2n$ respectively. 
Assuming no collisions between the entities,
in the herding problem $\mathbf{J}_{\mathbf{u}}^{+}$ is defined in $I$ because $h$ is of class $C^1$ in $I$. 

Evaders' and action dynamics together yields
\begin{equation}
\label{eq:diff_eq_system_u}
\left\{
\begin{aligned}
    \dot{\mathbf{x}} &= f(\mathbf{x}, \mathbf{u})
    \\ 
    \dot{\mathbf{u}} &= \mathbf{J}_{\mathbf{u}}^{T} \mathbf{J}_{\mathbf{u}}^{+}
    \kern -3pt\left(h^*(\mathbf{x},\mathbf{u})-\mathbf{J}_{\mathbf{x}} f(\mathbf{x}, \mathbf{u}) \right)
\end{aligned}
\right. ,
\end{equation}
which is an explicit system expanded from the original~\eqref{eq:initial_system}. 

In this problem, under no more requirements, we consider
\begin{equation}
\label{Eq:f_star_real}
f^*(\mathbf{x})=-\mathbf{K}_f\widetilde{\mathbf{x}},
\end{equation}
with $\mathbf{K}_f$ a positive definite matrix, as the desired closed-loop behaviour. This expression meets the conditions of continuity and differentiability of Theorem~\ref{theorem:exist_and_smooth} for all $\mathbf{x}$. On the other hand, we consider
\begin{equation}\label{eq:h_star}
    h^*(\mathbf{x},\mathbf{u}) = - \mathbf{K}_h h(\mathbf{x},\mathbf{u}),
\end{equation}
with $\mathbf{K}_h$ a positive definite matrix, so $\mathbf{0}$ is a globally asymptotically stable point of $dh/dt$. 

\begin{theorem}
\label{Theorem:u_dynamics}
Assume that Corollary~\ref{corollary:implicit_theorem} holds. Given $f^*$ in~\eqref{Eq:f_star_real} and $h^*$ in~\eqref{eq:h_star} such that the matrix
\begin{equation}\label{eq:negative_matrix}
    \mathbf{K} = \left.\begin{pmatrix}
    -\mathbf{K}_f & 0.5 \mathbf{I}_{2m}
    \\
    0.5 \mathbf{I}_{2m} & -\mathbf{K}_h
    \end{pmatrix}\right.
\end{equation}
is negative definite, then the system in~\eqref{eq:diff_eq_system_u} is globally asymptotically stable (GAS) in $I$.
\end{theorem}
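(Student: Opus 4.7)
The plan is to change coordinates from the native state--action pair $(\mathbf{x},\mathbf{u})$ to the error--residual pair $(\tilde{\mathbf{x}},h)$, show that in these coordinates the closed loop collapses to a linear cascade whose $h$-subsystem is autonomous, and then run a standard quadratic Lyapunov argument whose derivative is exactly the quadratic form $z^{T}\mathbf{K} z$ with $\mathbf{K}$ as in~\eqref{eq:negative_matrix}.

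First I would verify that the update law~\eqref{eq:u_dynamics} is engineered precisely so that $\mathbf{J}_{\mathbf{u}}\dot{\mathbf{u}}=h^{*}(\mathbf{x},\mathbf{u})-\mathbf{J}_{\mathbf{x}}f(\mathbf{x},\mathbf{u})$ on $I$: the pseudo-inverse factor in~\eqref{eq:u_dynamics} is, by construction, a right-inverse of $\mathbf{J}_{\mathbf{u}}$, which is well-defined on $I$ thanks to the full-rank condition guaranteed by Corollary~\ref{corollary:implicit_theorem}. Differentiating $h(\mathbf{x}(t),\mathbf{u}(t))$ along trajectories of~\eqref{eq:diff_eq_system_u} gives $\dot{h}=\mathbf{J}_{\mathbf{x}}f+\mathbf{J}_{\mathbf{u}}\dot{\mathbf{u}}$; substituting and cancelling the $\mathbf{J}_{\mathbf{x}}f$ terms leaves $\dot{h}=h^{*}=-\mathbf{K}_h h$, an autonomous linear ODE in $h$. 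Because $\mathbf{x}^{*}$ is constant, $\dot{\tilde{\mathbf{x}}}=\dot{\mathbf{x}}=f=h+f^{*}=-\mathbf{K}_f\tilde{\mathbf{x}}+h$, so the closed loop reduces to the cascade $\dot{\tilde{\mathbf{x}}}=-\mathbf{K}_f\tilde{\mathbf{x}}+h$, $\dot{h}=-\mathbf{K}_h h$, whose equilibrium is the origin in $(\tilde{\mathbf{x}},h)$.

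For the stability step I would take the obvious quadratic candidate $V=\tfrac{1}{2}\tilde{\mathbf{x}}^{T}\tilde{\mathbf{x}}+\tfrac{1}{2}h^{T}h$, positive definite and radially unbounded, and differentiate along the cascade to get $\dot{V}=-\tilde{\mathbf{x}}^{T}\mathbf{K}_f\tilde{\mathbf{x}}+\tilde{\mathbf{x}}^{T}h-h^{T}\mathbf{K}_h h=z^{T}\mathbf{K} z$ with $z=[\tilde{\mathbf{x}}^{T},h^{T}]^{T}$ and $\mathbf{K}$ exactly the block matrix in~\eqref{eq:negative_matrix}. The hypothesis $\mathbf{K}\prec 0$ then yields $\dot{V}\le -\lambda\|z\|^{2}$ for some $\lambda>0$, giving GAS of the origin in $(\tilde{\mathbf{x}},h)$-coordinates. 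To transfer convergence back to the native variables of~\eqref{eq:diff_eq_system_u}, I would invoke Corollary~\ref{corollary:implicit_theorem}: since $h(\mathbf{x}^{*},\cdot)=0$ admits the unique $C^{1}$ root $\mathbf{u}_{0}^{*}$, $(\tilde{\mathbf{x}},h)\to 0$ forces $(\mathbf{x},\mathbf{u})\to(\mathbf{x}^{*},\mathbf{u}_{0}^{*})$, delivering GAS of the closed loop on $I$.

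The main obstacle is the pseudo-inverse cancellation in the first step: the entire reduction rests on $\mathbf{J}_{\mathbf{u}}\dot{\mathbf{u}}=h^{*}-\mathbf{J}_{\mathbf{x}}f$ holding pointwise along the trajectory, not only at the target, which in turn needs the rank condition of Theorem~\ref{theorem:exist_and_smooth} to survive uniformly throughout $I$ rather than merely at $\mathbf{x}^{*}$. This is exactly the content supplied by the standing hypothesis ``Corollary~\ref{corollary:implicit_theorem} holds''. A secondary concern is checking that trajectories do not exit the region where the pseudo-inverse is well defined, but this follows from the exponential decay of the $h$-subsystem together with the ISS property of the $\tilde{\mathbf{x}}$-subsystem with respect to $h$ under the paper's no-collision standing assumption.
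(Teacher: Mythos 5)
Your proposal is correct and follows essentially the same route as the paper: first show that the action dynamics~\eqref{eq:u_dynamics} force $dh/dt=h^{*}=-\mathbf{K}_h h$ via the chain rule and the (pseudo-)inverse cancellation, then take $V=\tfrac{1}{2}\widetilde{\mathbf{x}}^{T}\widetilde{\mathbf{x}}+\tfrac{1}{2}h^{T}h$ and observe that $\dot{V}$ is exactly the quadratic form associated with $\mathbf{K}$ in~\eqref{eq:negative_matrix}. Your additional remarks on the cascade structure and on transferring convergence back to $(\mathbf{x},\mathbf{u})$ are consistent elaborations rather than a different argument.
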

\begin{proof}
We first show that~\eqref{eq:diff_eq_system_u} makes $dh/dt$ evolve according to $h^*$. We omit the dependencies with $\mathbf{x}$ and $\mathbf{u}$ to simplify the notation. Application of the chain rule to $h$ gives
\begin{equation}\label{eq:der_h}
    \frac{d h}{d t} = \frac{\partial h}{\partial \mathbf{x}} \frac{d \mathbf{x}}{d t} + \frac{\partial h}{\partial \mathbf{u}} \frac{d \mathbf{u}}{d t} = \mathbf{J}_{\mathbf{x}} \frac{d \mathbf{x}}{d t} + \mathbf{J}_{\mathbf{u}} \frac{d \mathbf{u}}{d t}.
\end{equation}
The substitution of~\eqref{eq:u_dynamics} in the right side of~\eqref{eq:der_h} yields to~\eqref{eq:h_star}, demonstrating that the action dynamics in~\eqref{eq:u_dynamics} makes $dh/dt$ evolve according to $h^*$. 

Now, let $V = \frac{1}{2}\widetilde{\mathbf{x}}^T\widetilde{\mathbf{x}} + \frac{1}{2}h^T h$ a Lyapunov function candidate. Its derivative is
\begin{equation}\label{eq:Ly_theorem_der}
    \dot{V} = \widetilde{\mathbf{x}}^T f + h^T h^* = \widetilde{\mathbf{x}}^T h + \widetilde{\mathbf{x}}^T f^* + h^T h^*.
\end{equation}
Substituting $f^*$ and $h^*$ in~\eqref{eq:Ly_theorem_der} gives
\begin{equation}\label{eq:Ly_theorem_der_2}
    \dot{V} = \widetilde{\mathbf{x}}^T h - \widetilde{\mathbf{x}}^T \mathbf{K}_f \widetilde{\mathbf{x}} - h^T \mathbf{K}_h h = \begin{pmatrix} \widetilde{\mathbf{x}}^T\! & \! h^T \end{pmatrix} \mathbf{K} \begin{pmatrix} \widetilde{\mathbf{x}}^T \!& \! h^T \end{pmatrix}^T
\end{equation}

If $\mathbf{K}_f$ and $\mathbf{K}_h$ are designed such that $\mathbf{K}$ is negative definite, then the system defined in~\eqref{eq:diff_eq_system_u} is GAS, driving both $h$ and $\widetilde{\mathbf{x}}$ to zero.
\end{proof}

In practice, by choosing $||\mathbf{K}_h|| \gg ||\mathbf{K}_f||$ we can impose the convergence of $h$ to its roots to be much faster than the desired closed-loop dynamics, so the evaders behave following $f^*$.

As a corollary of the previous results, the Explicit Design can deal with discontinuous dynamics in the position of the evaders like the Exponential Model in~\eqref{eq:LicitraBase} due to the fact that we are designing over $\dot{\mathbf{u}}$ instead of $\mathbf{u}$. This is validated in the simulations and experiments of Sections~\ref{sec:simulations} and~\ref{sec:experiments}.

The concept behind the Explicit Design can be observed in the literature, as in~\cite{Blanchini2017ModelFree}, where the time derivative of the action is used to characterise and tune static plants. However, to the best of our knowledge, it has never been applied in dynamical systems to build an explicit controller. 

From an algorithmic point of view, the calculation of the control action is very simple. At each instant, the controller receives $\mathbf{x}$ and $\mathbf{u}$ from an observer and/or from measurements. Then, we compute $f^*(\mathbf{x})$ with~\eqref{Eq:f_star_real}, which, together with the dynamic model of the evaders $f(\mathbf{x},\mathbf{u})$ (e.g.,~\eqref{eq:PiersonBase} or~\eqref{eq:LicitraBase}), allow us to compute $h^*(\mathbf{x},\mathbf{u})$ with Eq.~\eqref{eq:h_star}. Besides, we calculate the Jacobians $\mathbf{J}_{\mathbf{x}},\mathbf{J}_{\mathbf{u}}$ either analytically or numerically, depending on the complexity of finding their closed expression. Finally, we compute $\dot{\mathbf{u}}$ from Eq.~\eqref{eq:diff_eq_system_u}. 

\subsection{Implicit Design}\label{subsec:implicit_design}

A conceptually simpler alternative is to compute the action using a numerical method, giving rise to a discrete solution. Under the compliance of Corollary~\ref{corollary:implicit_theorem}, $\mathbf{u}$ exists and is smooth, so we can find the roots of $h$ with a standard numerical method to impose the desired dynamics $f^*$ in the evaders, always under the particular conditions of the numerical method. Despite the simpler approach, the dependency on the numerical method restricts the control strategy to situations where the configuration of the herders is, in general, closer to the roots of $h$ than with the Explicit Design.

At each instant, the numerical method receives $\mathbf{x}$ and $\mathbf{u}$. Then, in each iteration of the numerical method, denoted by the index $k$,  $h(\mathbf{x},\mathbf{u}_{k-1})$ and $\mathbf{J}_{\mathbf{u}}^{k-1}$ are calculated in order to obtain $\mathbf{u}_{k}$. In this work we use \textit{Levenberg-Marquardt}~\cite{Marquardt1963LM} as numerical method, yielding to the following iteration
\begin{equation}\label{eq:LM}
\begin{aligned}
    \mathbf{u}_{k} =& \mathbf{u}_{k-1} + \mathbf{\zeta} = 
    \\
    &\mathbf{u}_{k-1} - ((\mathbf{J}_{\mathbf{u}}^{k-1})^T \mathbf{J}_{\mathbf{u}}^{k-1} + \lambda \mathbf{I}_{2n})^{-1} (\mathbf{J}_{\mathbf{u}}^{k-1})^{T} h(\mathbf{x},\mathbf{u}_{k-1}),
\end{aligned}    
\end{equation}
where $\lambda$ is a weighting factor. If $||\mathbf{\zeta}||$, where $||\cdot||$ denotes the Euclidean norm, is less than a tolerance $\epsilon$ before reaching $k_{max}$, the method stops. 

From a more practical point of view, the solution is implemented in discrete time with sample time $T$. The sample time needs to be sufficiently small to maintain stability, but also large enough to ensure convergence to the roots of $h$. This implies a trade-off in the election of $T$, which will be assessed in  Sections~\ref{sec:simulations} and~\ref{sec:experiments}. The values of $\epsilon$ and $k_{max}$ are adjusted to fit the accuracy in the motion and computing capabilities of the robots, whereas $\lambda$ achieves smoothness in the movements of the robots.

As a last comment, the Implicit Control formulation allows us to deal, again, with discontinuous dynamics in the position of the evaders like the Exponential Model in~\eqref{eq:LicitraBase}. Since, at each instant, the numerical method performs a series of iterations departing from the previous computed action, the only effect of the discontinuity will be that of yielding to more iterations to reach the roots of $h$ in the very first instants after the discontinuity. 


\section{Simulation results}\label{sec:simulations}
This Section describes the results of testing the herding control strategy in simulations. The objectives with the simulations are twofold. 1) Validation and comparison between design methods. 2) Demonstration of the success of the proposal against challenging situations using the models in Section~\ref{sec:prosta}. Times have been calculated using \textit{tic-toc} functions of Matlab 2019b in a laptop with an Intel Core i7-5500U CPS at 2.4GHz. In the simulations, we focus on the performance of our proposal in the herding, so the approaching and assignment are supposed to be previously executed and completed. 
The first case of study consists in the herding of 5 evaders by 5 herders, and the details are in Table~\ref{table:initial_values}.
\begin{figure*}[!ht]
    \centering
    \begin{tabular}{cccc}
         {\footnotesize (a)}
         &  
         {\footnotesize (b)}
         &
         {\footnotesize (c)}
         &
         {\footnotesize (d)}
         \\\hspace{-0.6cm} \hspace{0.1cm}
         \includegraphics[width=0.21\textwidth]{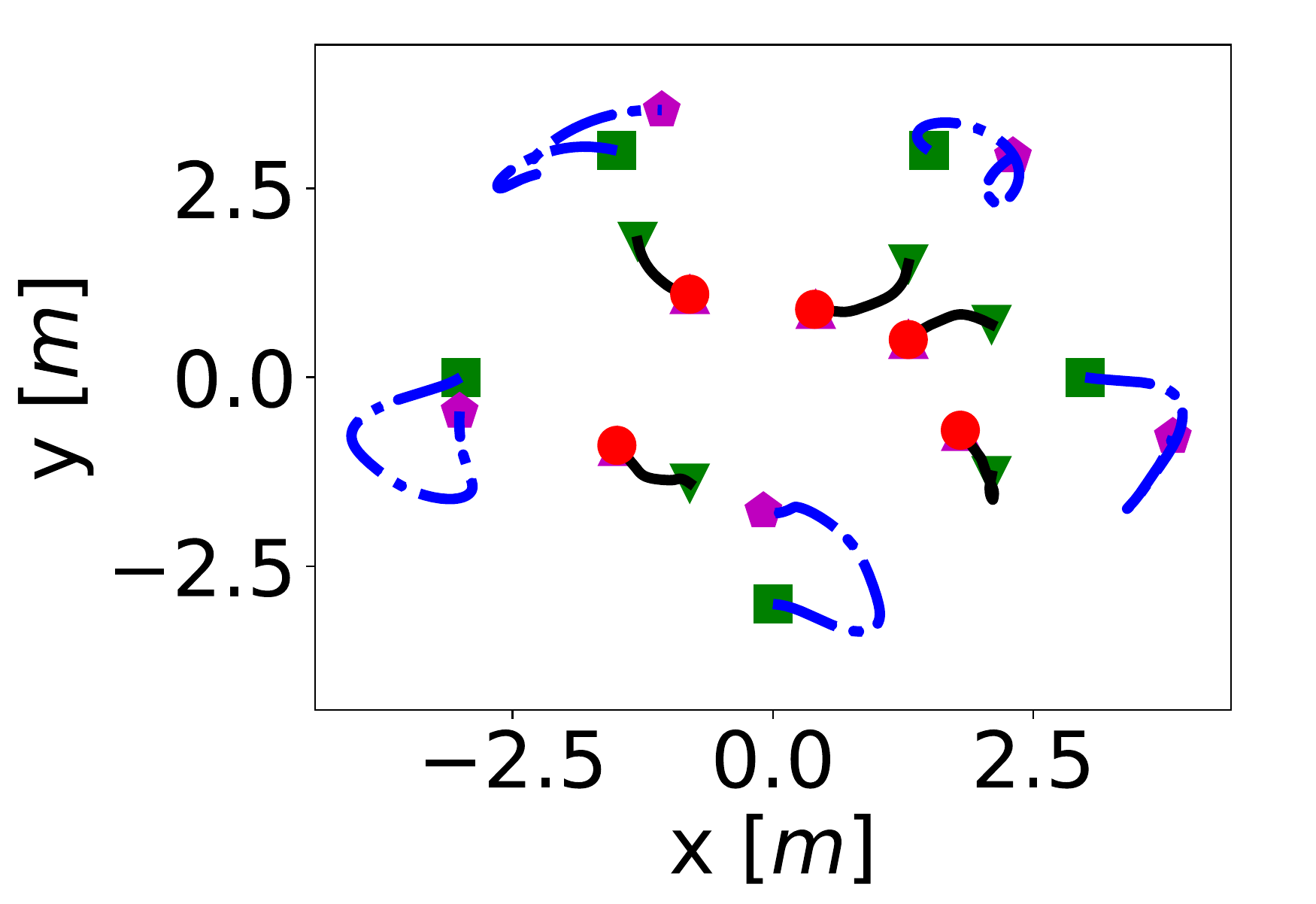}\hspace{0.1cm}
         &\hspace{-0.45cm}\hspace{0.1cm}
         \includegraphics[width=0.21\textwidth]{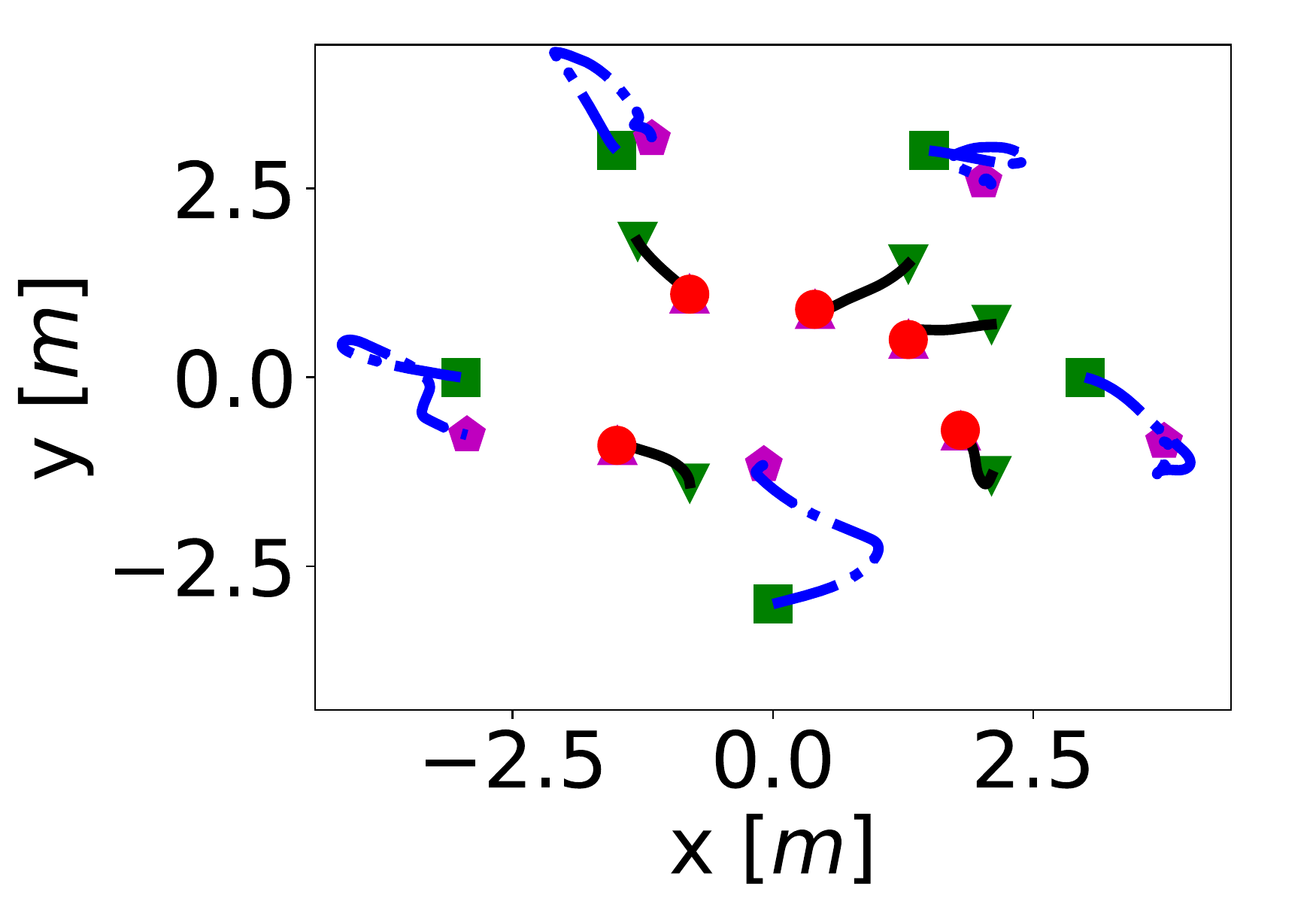}\hspace{0.1cm}
         &\hspace{-0.4cm}\hspace{0.1cm}
         \includegraphics[width=0.21\textwidth]{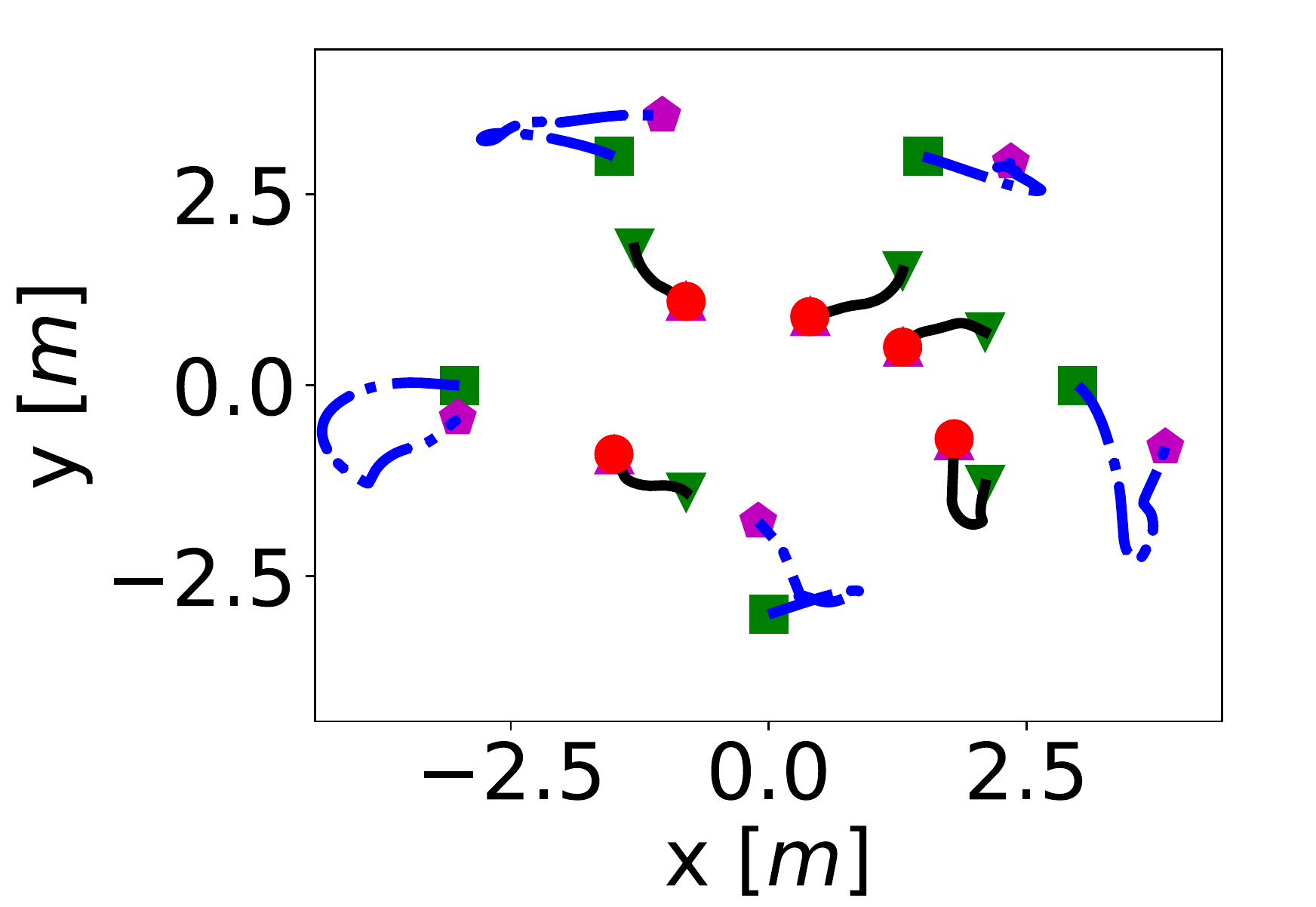}\hspace{0.1cm}
         &\hspace{-0.4cm}\hspace{0.1cm}
         \includegraphics[width=0.21\textwidth]{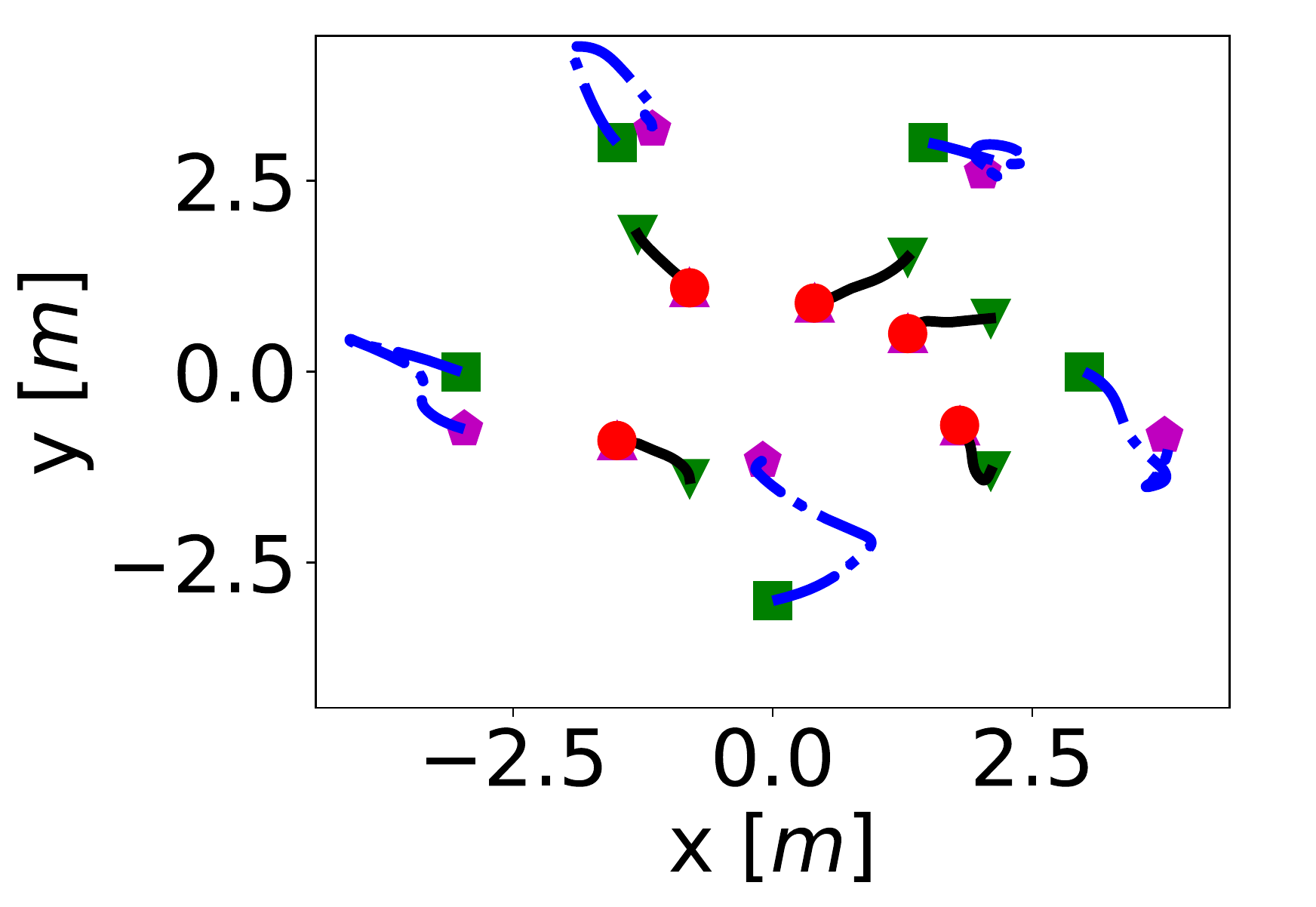}\hspace{0.1cm}
         \\\hspace{0.1cm}
         \includegraphics[width=0.21\textwidth]{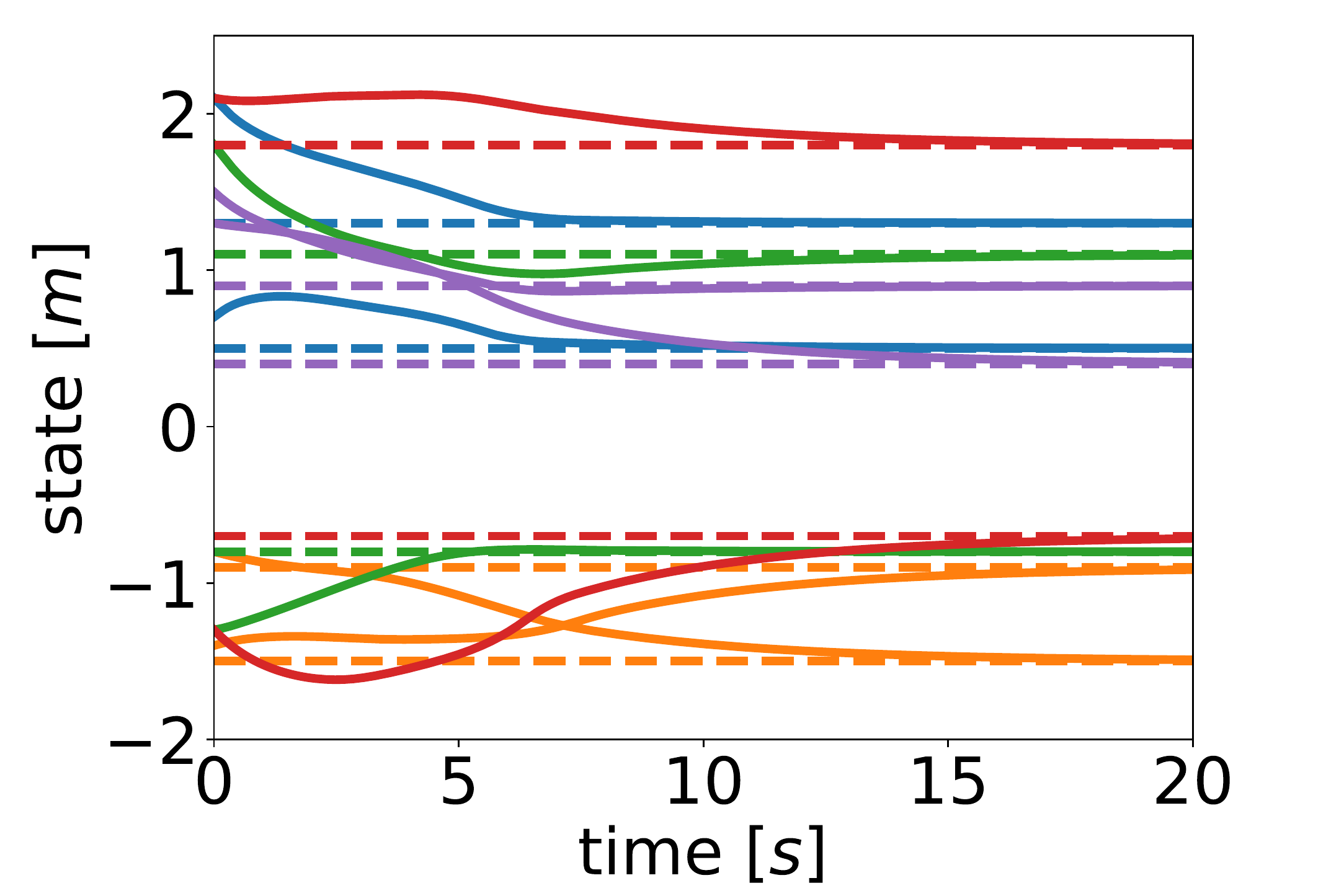}\hspace{0.1cm}
         &\hspace{0.1cm}
         \includegraphics[width=0.21\textwidth]{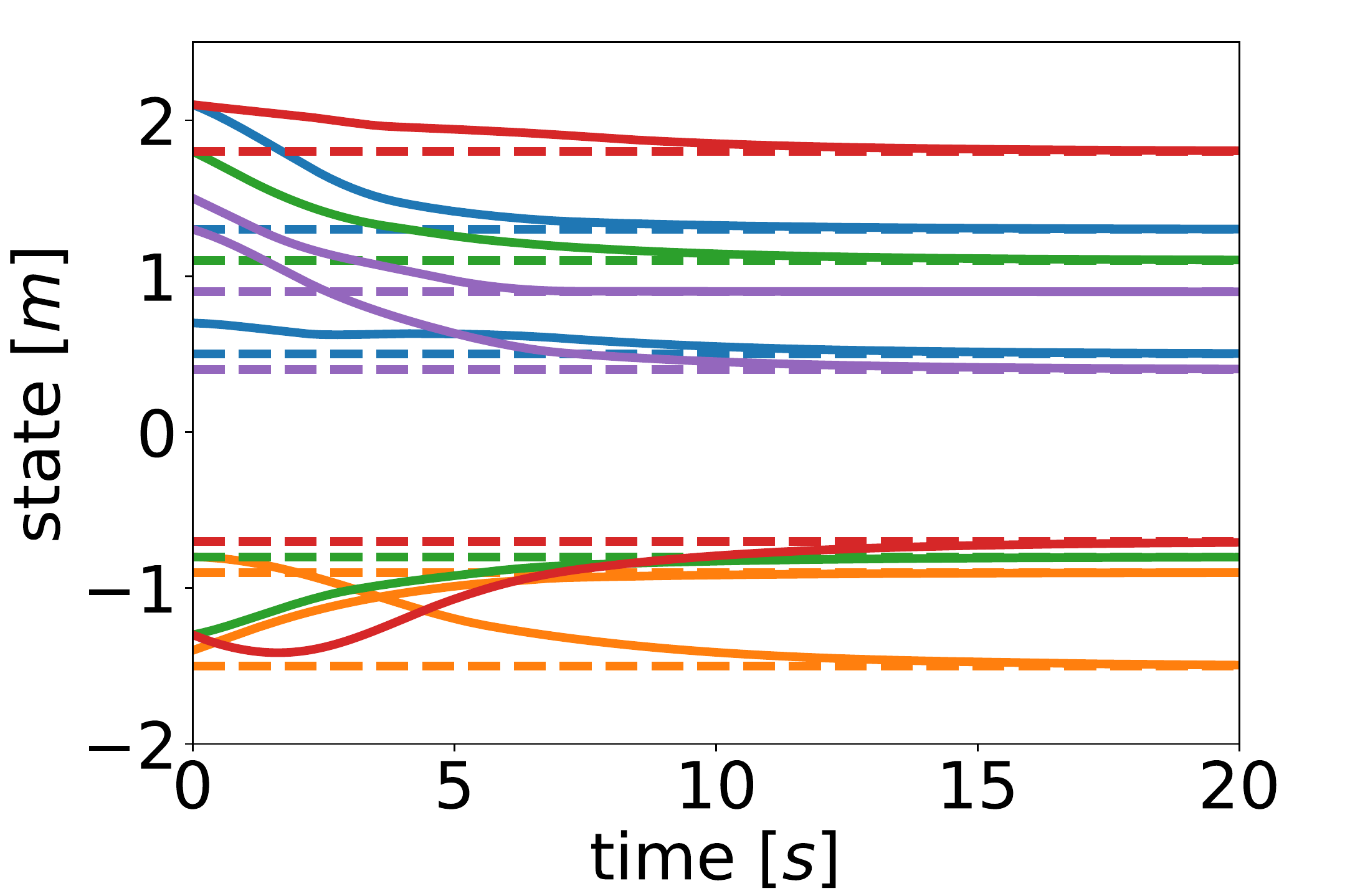}\hspace{0.1cm}
         &\hspace{0.1cm}
         \includegraphics[width=0.21\textwidth]{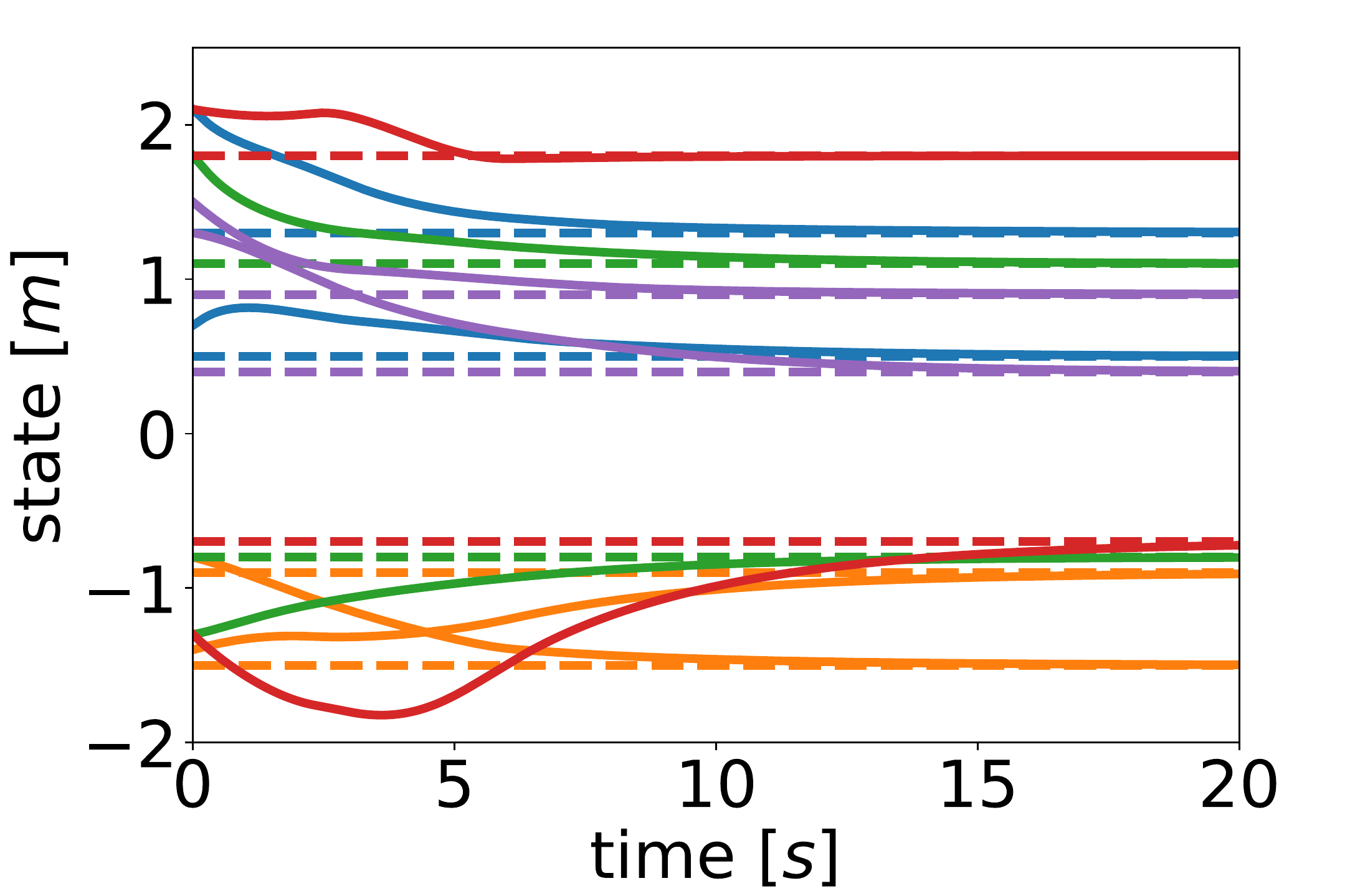}\hspace{0.1cm}
         &\hspace{0.1cm}
         \includegraphics[width=0.21\textwidth]{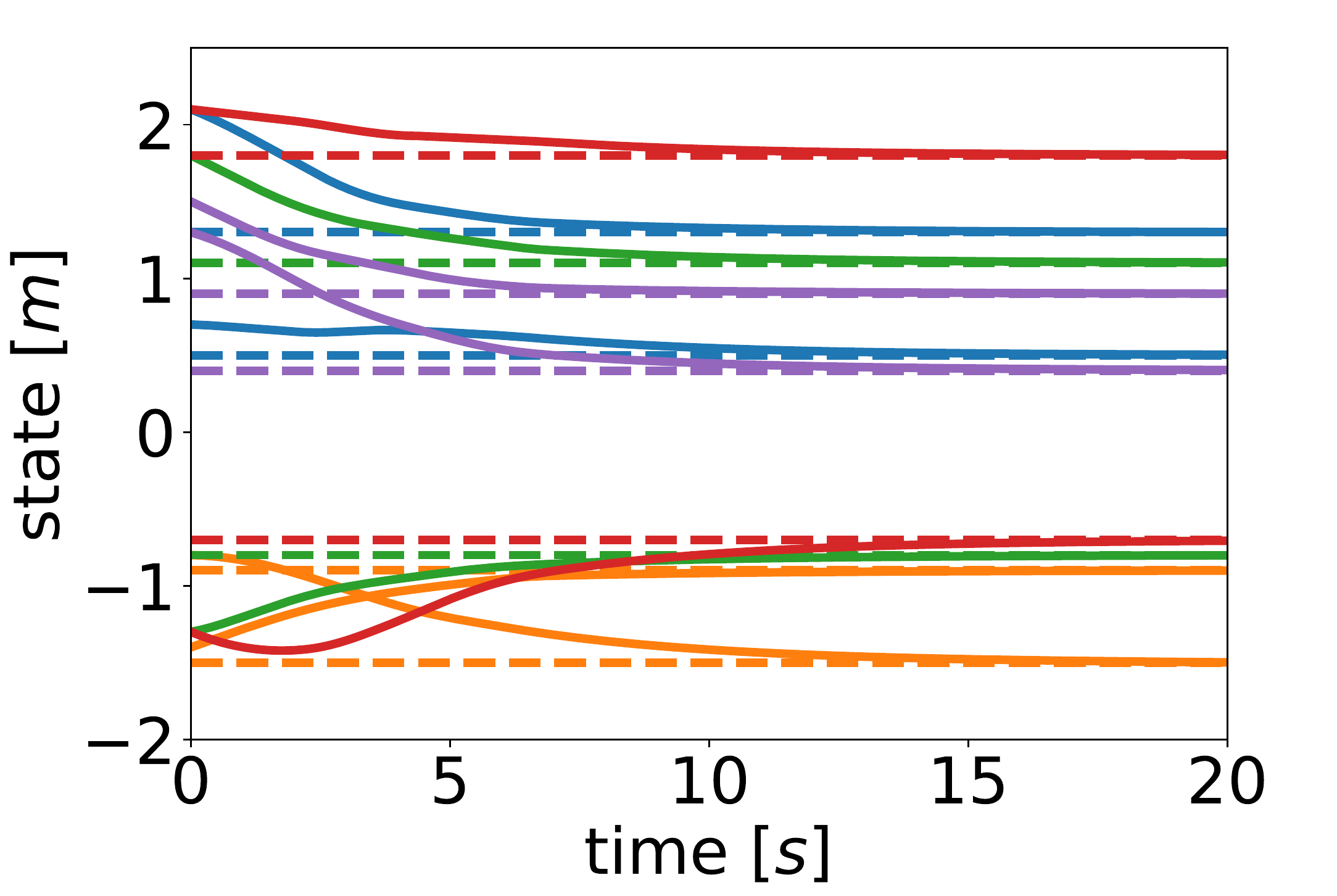}\hspace{0.1cm}
    \end{tabular}
	\caption{Simulation results of the herding of 5 evaders by 5 herders: (a) Inverse evaders, Explicit solution, (b) Exponential evaders, Explicit solution, (c) Inverse evaders, Implicit solution, (d) Exponential evaders, Implicit solution. The first row shows the trajectories followed by the herders (blue-dashed lines) and the evaders (black lines), departing from the green squares, ending in the magenta pentagons and with desired positions in red circles. The second row presents the evolution of the position of the evaders with time, where the dashed lines mark the desired reference.}
	\label{fig:sim_results}
\end{figure*}
\begin{table}[!ht]
\centering
\caption{Initial configuration and parameters of the simulation.}
\begin{tabular}{c}
\begin{tabular}{c|c|c|c|c|c|c|c|c|c|c}
 &$x_1$&$y_1$&$x_2$&$y_2$&$x_3$&$y_3$&$x_4$&$y_4$&$x_5$& $y_5$\\
\hline
$\mathbf{x}$[m]& $2.1$ & $0.7$ & $-0.8$ & $-1.4$ & $-1.3$ & $1.8$ & $2.1$ & $-1.3$ & $1.3$ & $1.5$\\
\hline
$\mathbf{x}^*$[m]& $1.3$ & $0.5$ & $-1.5$ & $-0.9$ & $-0.8$ & $1.1$ & $1.8$ & $-0.7$ & $0.4$ & $0.9$\\
\hline
$\mathbf{u}$[m]& $-3.0$ & $0.0$ & $-1.5$ & $3.0$ & $3.0$ & $0.0$ & $0.0$ & $-3.0$ & $1.5$ & $3.0$
\end{tabular}
\\
\vspace{-0.2cm}
\\
\begin{tabular}{c|c|c|c|c|c|c|c}
 $\gamma_j$&$\alpha_j$&$\beta_j$&$\sigma_j$&$\lambda$&$\epsilon$&$k_{max}$& $r$\\
\hline
$1.0$ $\forall j$&$0.6$ $\forall j$&$0.5$ $\forall j$ & $2.0$ $\forall j$ & $0.1$ & $1$mm & $20$ & $1.0$m
\end{tabular}
\end{tabular}
\label{table:initial_values}
\end{table}

For implementation purposes, a sample time $T = 10$ms is chosen and both evaders and herders have the same maximum velocity $v_{max} = 0.4$m/s.
We set $\mathbf{K}_f = 0.25 \mathbf{I}_{2 m}$, yielding a settling time of $12$s. To ensure that the conditions of Theorem~\ref{Theorem:u_dynamics} hold, we set $\mathbf{K}_h = 50 \mathbf{I}_{2 m}$. 
Due to the complexity of the models and $h$, the Jacobians $\mathbf{J}_{\mathbf{u}}$ and $\mathbf{J}_{\mathbf{x}}$ are computed by numerical differentiation.

The first row of Fig.~\ref{fig:sim_results} shows the trajectories followed by herders and evaders for the different test cases.
Both methods are able to control the evaders successfully with similar behaviour in all the experiments.
Conversely, the performance of the herders presents some differences depending on the model, but they are almost equivalent among control algorithms.
This highlights the complexity of the control problem at hand, greater for the Exponential Model than the Inverse because of the switching dynamics.

The second row of Fig.~\ref{fig:sim_results} represents the evolution over time of the position of the evaders. When the Implicit Design is applied, the state evolves as an exponential function, reaching the desired position in $12$s according to the imposed closed-loop dynamics. On the other hand, the Explicit Design achieves the desired settling time with a slightly different transient response. 
The reason for that can be seen in Fig.~\ref{fig:diff_results}, where we show the difference between the Explicit and Implicit action. Initially, $dh/dt$ has not converged to zero. This convergence is subject to the numerical method in the Implicit Design, so performing enough iterations makes $dh/dt$ go to zero in the next instants. In the Explicit Design, the convergence depends on $h^*$, which in general is slower than the Implicit Design. Once $dh/dt$ has converged to zero, the trajectories follow the ideal form and the difference between actions vanishes.

\begin{figure}[!ht]
    \centering
    \begin{tabular}{cc}
        \includegraphics[width=0.46\columnwidth,height=0.30\columnwidth]{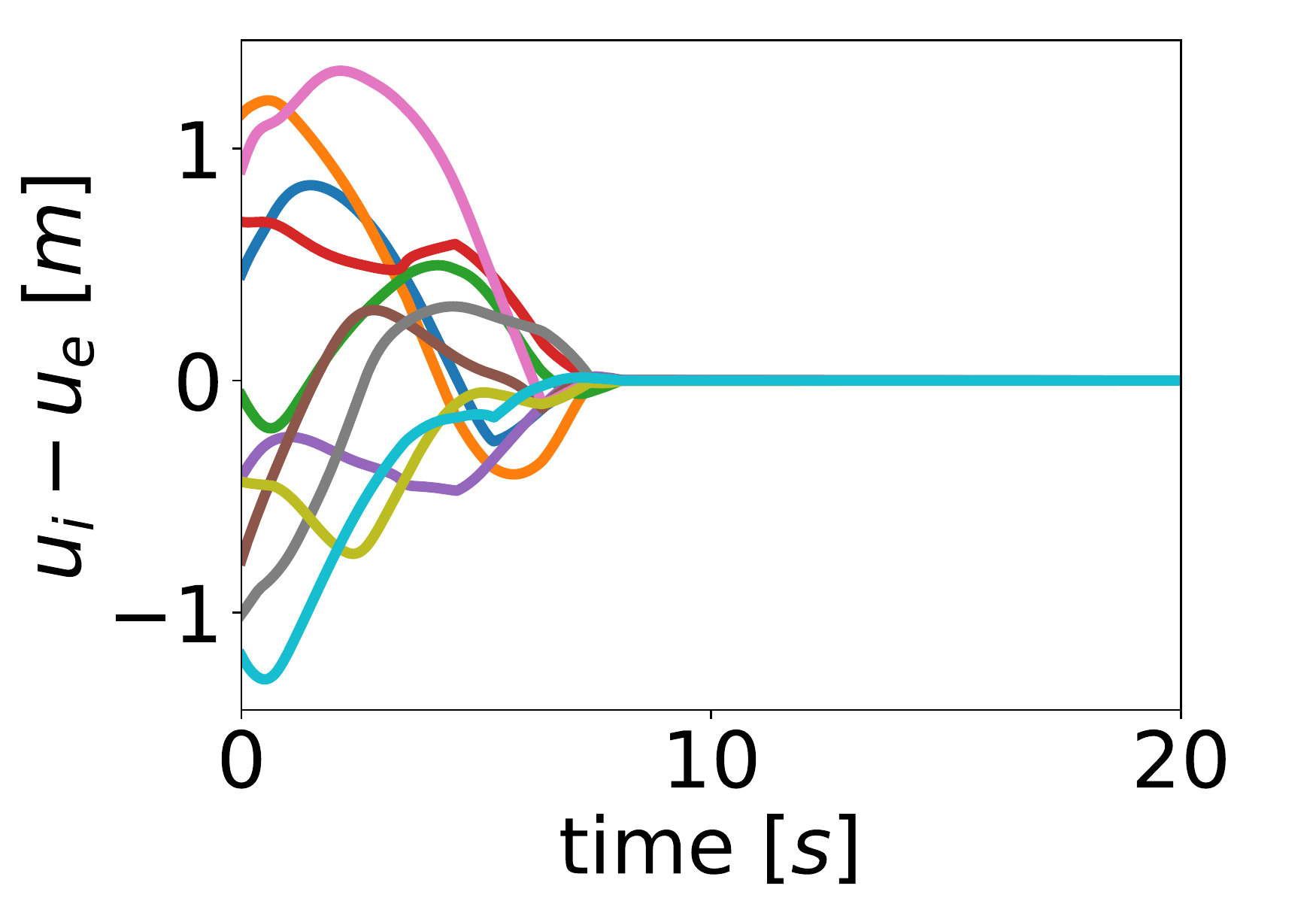}
         &  
        \includegraphics[width=0.46\columnwidth,height=0.30\columnwidth]{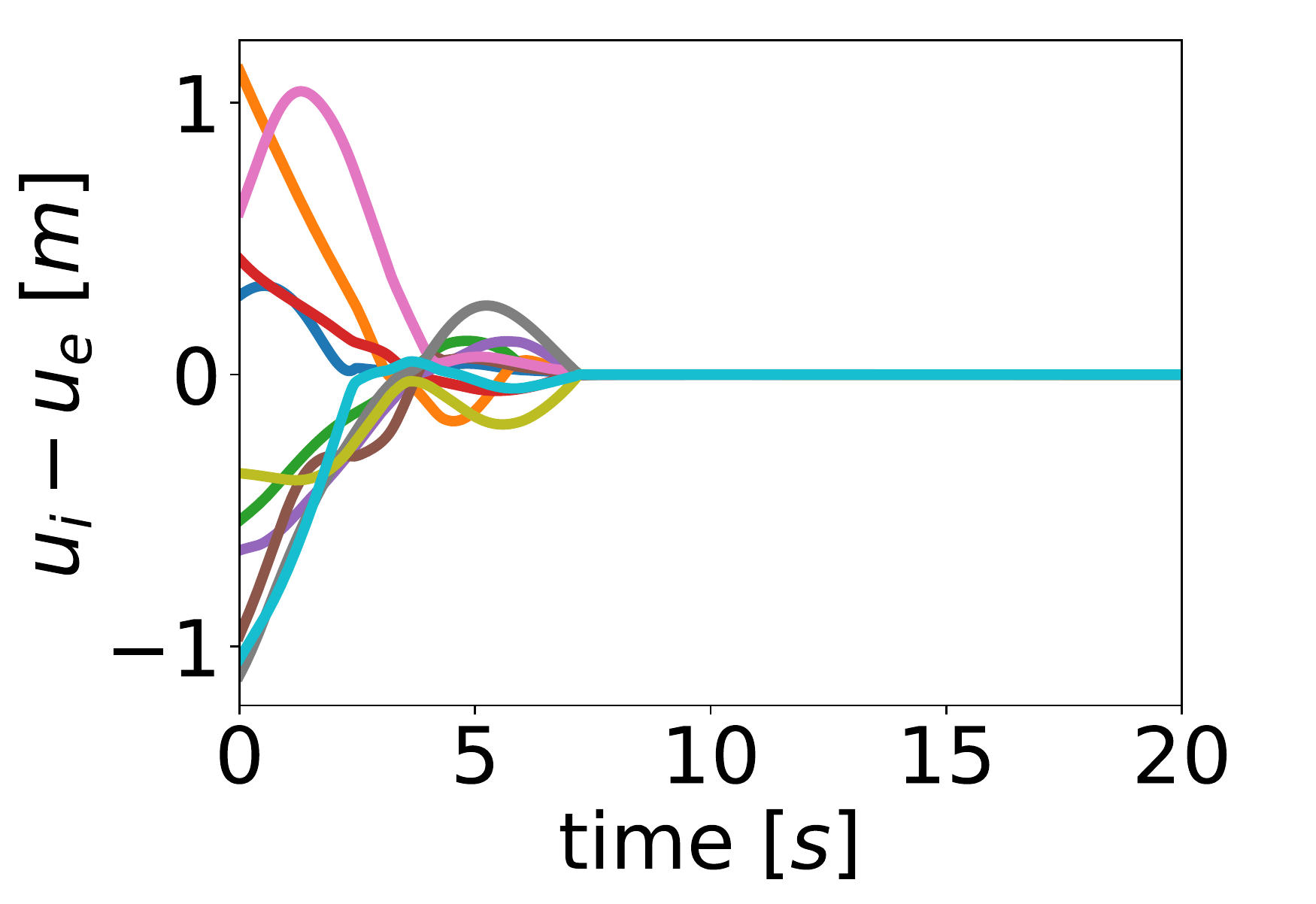}
    \end{tabular}
	\caption{Difference in the control action calculated by the Implicit Design and the Explicit Design: (left) Inverse Model, (right) Exponential Model. An arbitrarily colour has been assigned to each action in the sake of visibility.}
	\label{fig:diff_results}
\end{figure}

The Explicit solution is computationally cheaper since no iterations are needed to calculate the action. Nevertheless, the Implicit procedure is more reliable in imposing the desired closed-loop dynamics and it is easier to pose, so it is interesting to study the influence of the numerical method in the success of the solution.

In particular, we study the influence of the sampling time $T$ in the performance and computational cost of the Implicit Design. In Table~\ref{table:costeComp} we test the same examples of Fig.~\ref{fig:sim_results} by changing $T$, with values of $0.01$s (the original), $0.1$s and $0.5$s, removing the maximum number of iterations $k_{max}$ to allow the numerical method to run until the roots are found. In Table~\ref{table:costeComp}, $\tau$ is the computing time invested in one sample, whilst $\eta= ||h(\mathbf{x},\mathbf{u})||$ is the residual of the numerical method. The symbols $\overline{\cdot}$ and $\sigma(\cdot)$ denote the mean and standard deviation. 

A greater $T$ implies more time to compute the action but also a greater change in $h$, so the numerical method needs more iterations to converge. These conclusions are corroborated in Table~\ref{table:costeComp}, showing that the performance of the numerical method is enhanced when $T$ is small, achieving smaller values of $\eta$ with fewer iterations. Indeed, the row marked in red in Table~\ref{table:costeComp}, namely, the Exponential Model case with $T=0.5$s, fails and exposes this trade-off.

\begin{table}[!ht]
\centering
\caption{Evaluation of the numerical method as a function of $T$.}
\begin{tabular}{|c|c|c|c|c|c|c|}
\hline
Model & $T$ (ms) & $\overline{\tau}$ (ms) & $\overline{k}$ & $\sigma (k)$ & $\overline{\eta}\cdot 10^{-3}$ & $\sigma(\eta) \cdot 10^{-3}$\\
\hline
Inv.  & $10$ & $1.8$ & $2.4$ & $5.4$ & $0.7$  & $0.6$\\
Inv.  & $100$ & $3.9$ & $8.2$ & $28.9$ & $0.9$ & $1.2$\\
Inv.  & $500$ & $28.1$ & $57.7$ & $144.9$ & $8.2$ & $25.7$\\ 
\hline
Exp.  & $10$ & $6.6$ & $15.9$ & $77.1$ & $23.5$ & $10.2$\\
Exp.  & $100$ & $89.3$ & $264.5$ & $413.2$ & $32.9$ & $23.9$\\
\rowcolor{LightCyan}
Exp.  & $500$ & $279.4$ & $930.3$ & $276.2$ & $1140.8$ & $633.5$\\ 
\hline
\end{tabular}
\label{table:costeComp}
\end{table}

The flexibility and generality of the solution can be extended to heterogeneous groups and time-varying references, resulting in a more realistic herding. The example in Fig.~\ref{fig:TV_example} shows how three herders herd a group of three evaders. The red evader is Exponential while the purple ones are Inverse. Both design procedures perform similarly, so here we only show the results of the Implicit Design.

The desired herding configuration evolves according to
\begin{align*}
    \dot{x}^*_j = v_j^* ,  
    & &
    \dot{y}^*_j = 0.5 w_j^* \cos(w_j^* t + 2 \pi / j)
\end{align*}
with $\mathbf{w}^* \!\!=\!\! [0.05, 0.1, 0.02]$rad/s and $\mathbf{v}^*\!\! =\!\! [0.05, 0.05, 0.05]$m/s.

Initially, the herders move to drive the evaders to their sine references. This yields to trajectories surrounding and modulating the interaction forces with the evaders. Once the evaders are in their desired trajectories, the system reaches a steady-state behaviour where the periodic movement of the evaders is shared by the herders. 
\begin{figure*}[!ht]
    \centering
    \begin{tabular}{ccccc}
        {\footnotesize (a) Global}
        &
        {\footnotesize (b) $t \in [0,12]$ s}
        &
        {\footnotesize (c) $t \in [12,172]$ s}
        &
        {\footnotesize (d) $t \in [172,332]$ s}
        &
        {\footnotesize (e) $t \in [332,492]$ s}
        \\
        \includegraphics[width=0.18\textwidth]{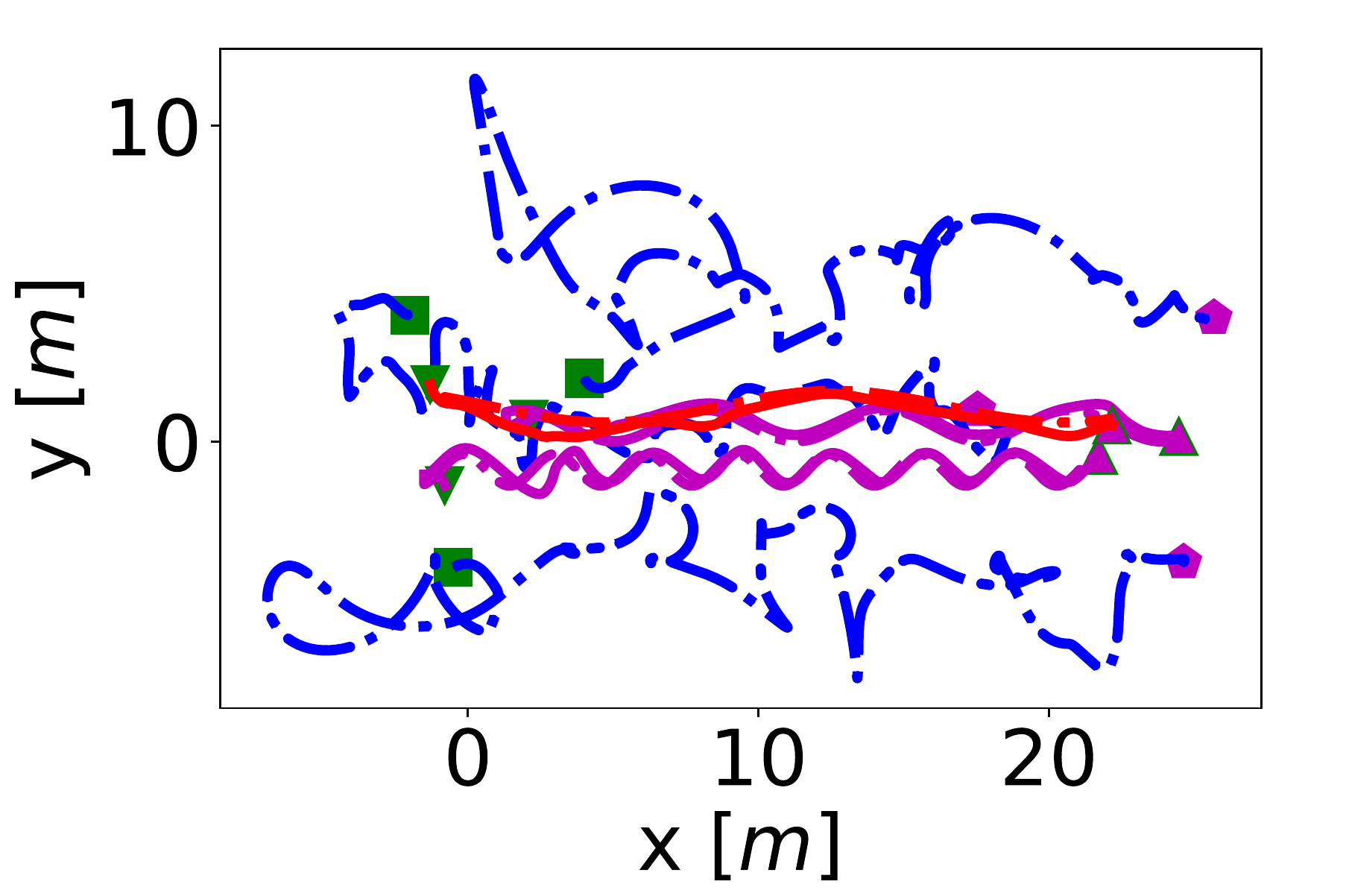}
        &\raisebox{.015\height}{ 
        \includegraphics[width=0.18\textwidth,height=0.123\textwidth]{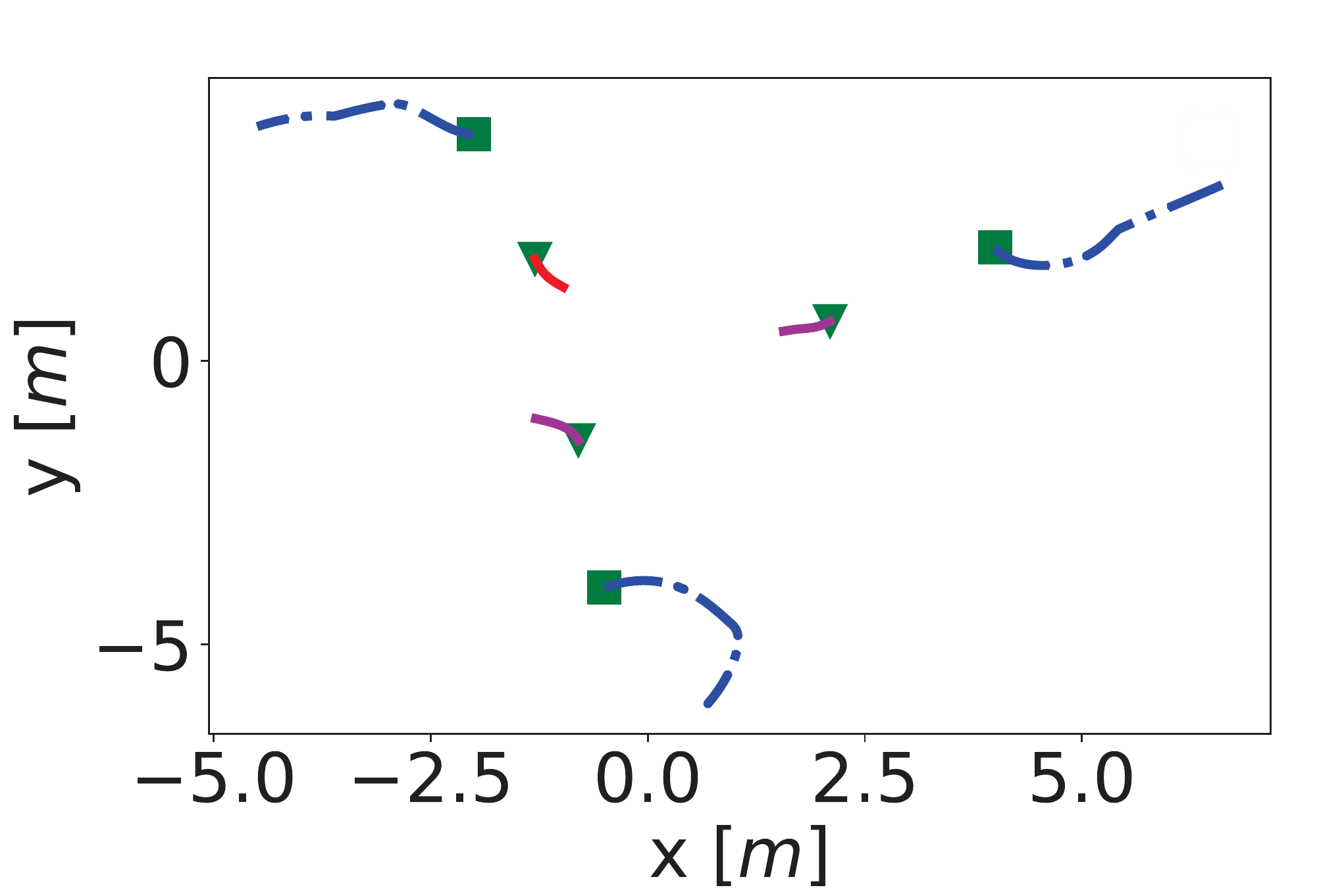}}
        &
        \includegraphics[width=0.18\textwidth,height=0.125\textwidth]{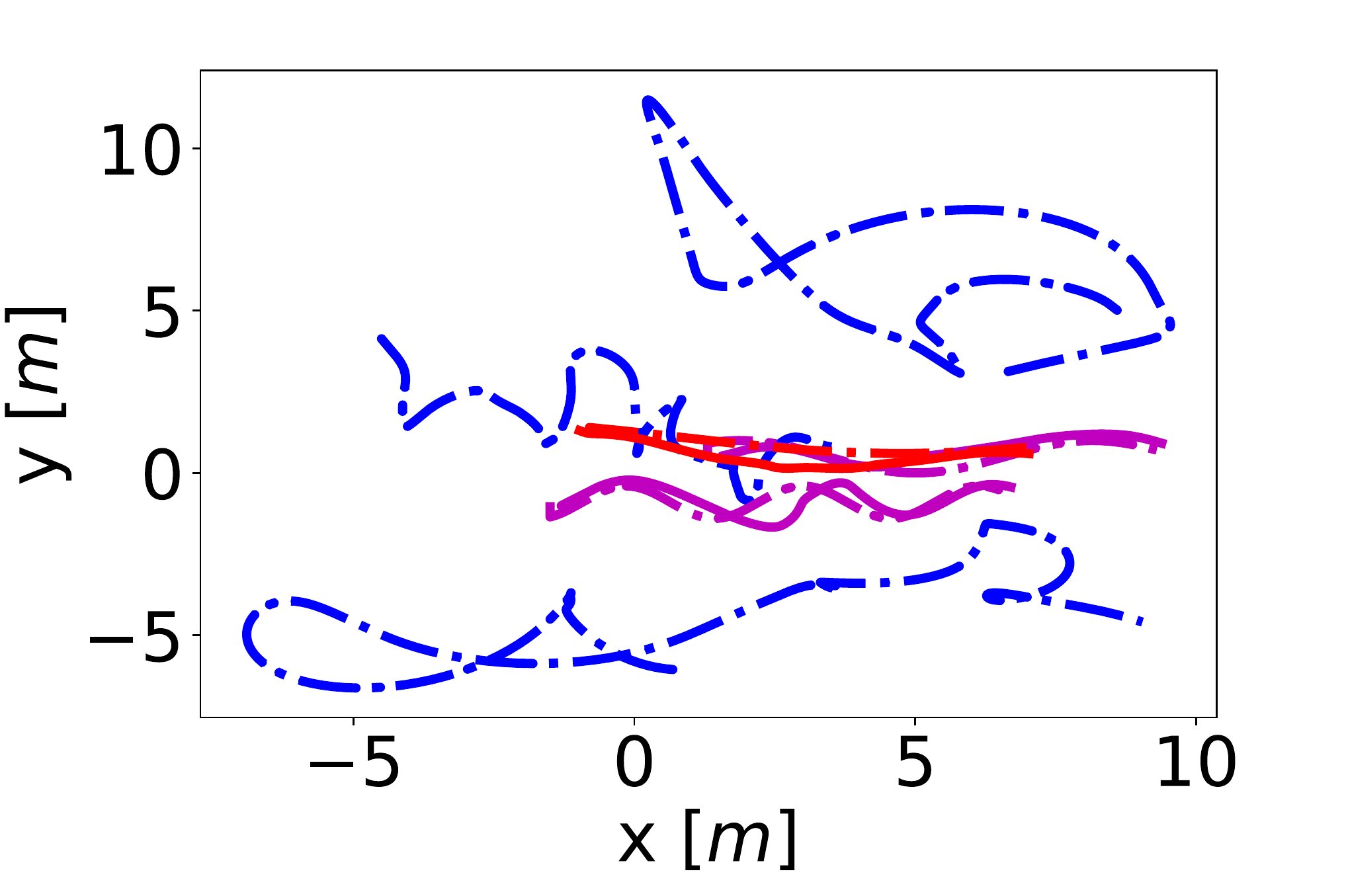}
        &\raisebox{-.02\height}{
        \includegraphics[width=0.18\textwidth,height=0.1265\textwidth]{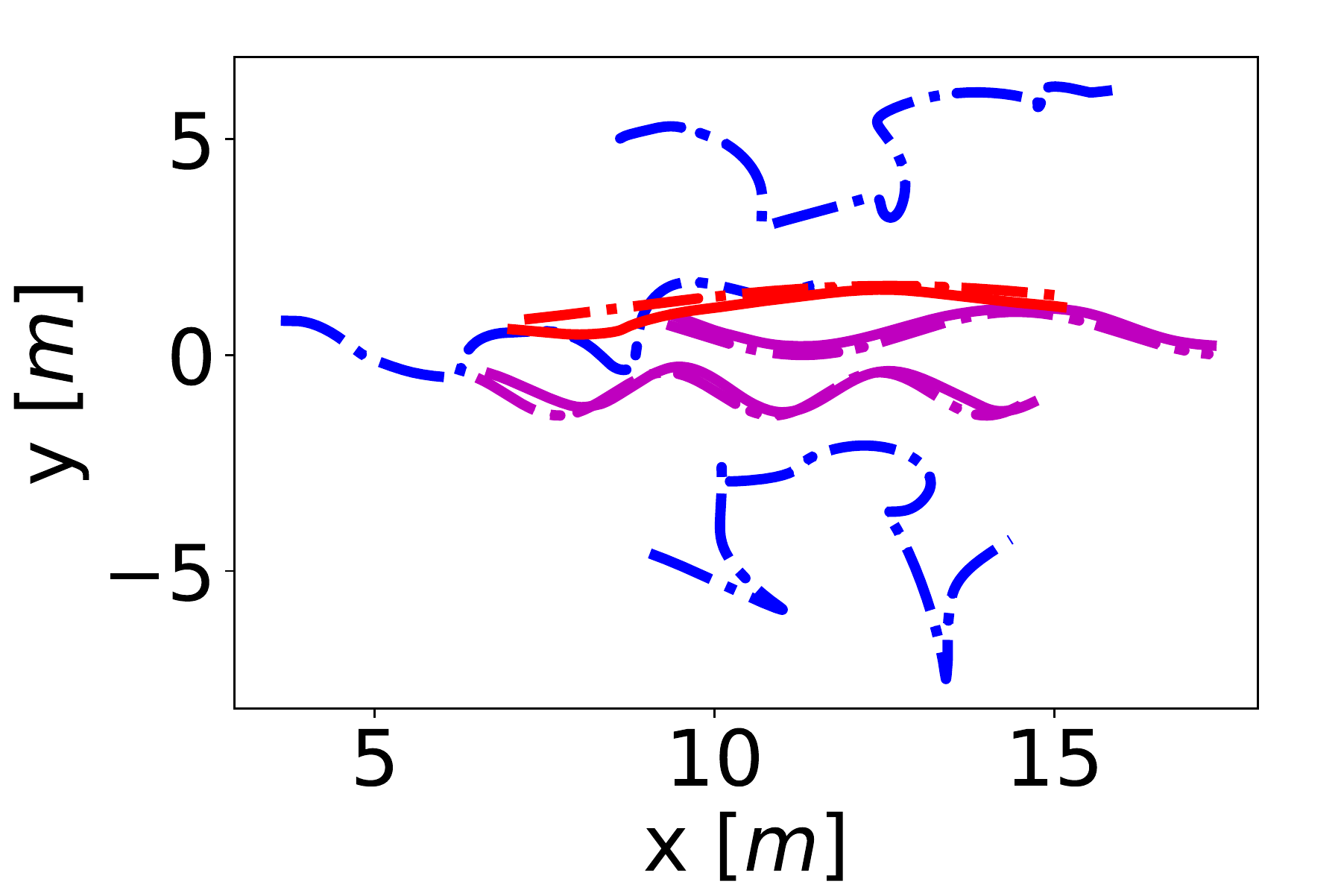}}
        &\raisebox{-.02\height}{
        \includegraphics[width=0.18\textwidth,height=0.1265\textwidth]{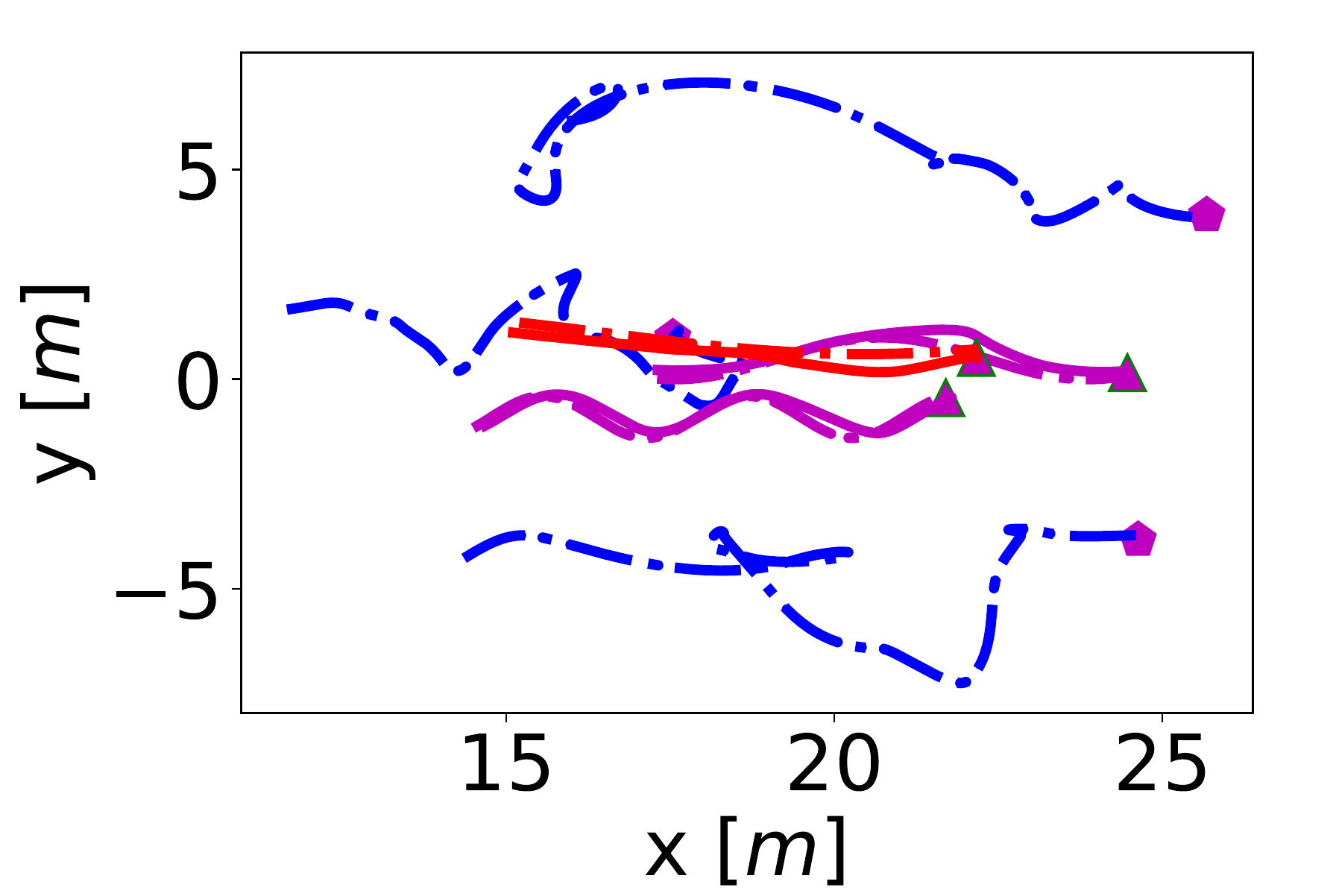}}
    \end{tabular}
	\caption{Three robotic herders herding three heterogeneous evaders: (a) complete trajectories followed by the entities, (b) first instants where herders drive evaders to initial location of the time-varying trajectories, (c)-(d) intermediate part of the herding, evaders following the time-varying trajectories, (e) evaders reaching the final desired positions. The purple evaders follow the Inverse Model and the red evader follows the Exponential Model. The other symbols and colours follow the convention in Fig.~\ref{fig:sim_results}.}
	\label{fig:TV_example}
\end{figure*}


\section{Experiments}\label{sec:experiments}
In this Section we extend the experiments to the real framework provided by the Robotarium arena~\cite{Pickem2017Robotarium}~\cite{Wilson2020Robotarium}. To do so, some robots play the role of herders while the others act as evaders, following the dynamics in Section~\ref{sec:prosta}. The robots are GRITSBot X playing in a $3.2$m x $2$m area, coordinated by a central server which receives odometry data and sends velocity commands to the robots at approximately delay of $0.033$s. Thus, a low level controller is used to translate the output of the control into velocity commands, with $v_{max}=0.2$m/s. Besides, robots use barrier certificates to avoid collisions. With this in mind, we adjust some parameters to fit the conditions of the experiment: $T=0.033$s, $\gamma = 0.025$, $\alpha=0.05$, $\sigma = 1.2$ for the static herding and $\mathbf{K}_f=0.5\mathbf{I}_{2m}$, $\gamma = 0.05$, $\alpha=0.2$ for the time-varying herding. 

\begin{figure*}[!ht]
\centering
\begin{tabular}{cccc}
    {\footnotesize (a)} & {\footnotesize (b)} & {\footnotesize (c)} & {\footnotesize (d)}
    \\\hspace{0.1cm}
    \includegraphics[width=0.21\textwidth]{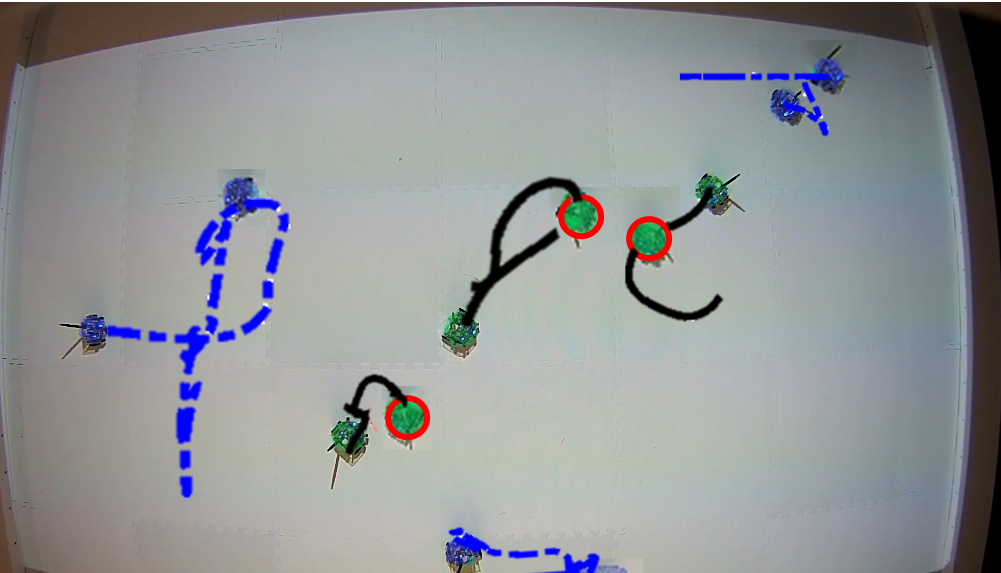}\hspace{0.1cm}
    & \hspace{0.1cm}
    \includegraphics[width=0.21\textwidth]{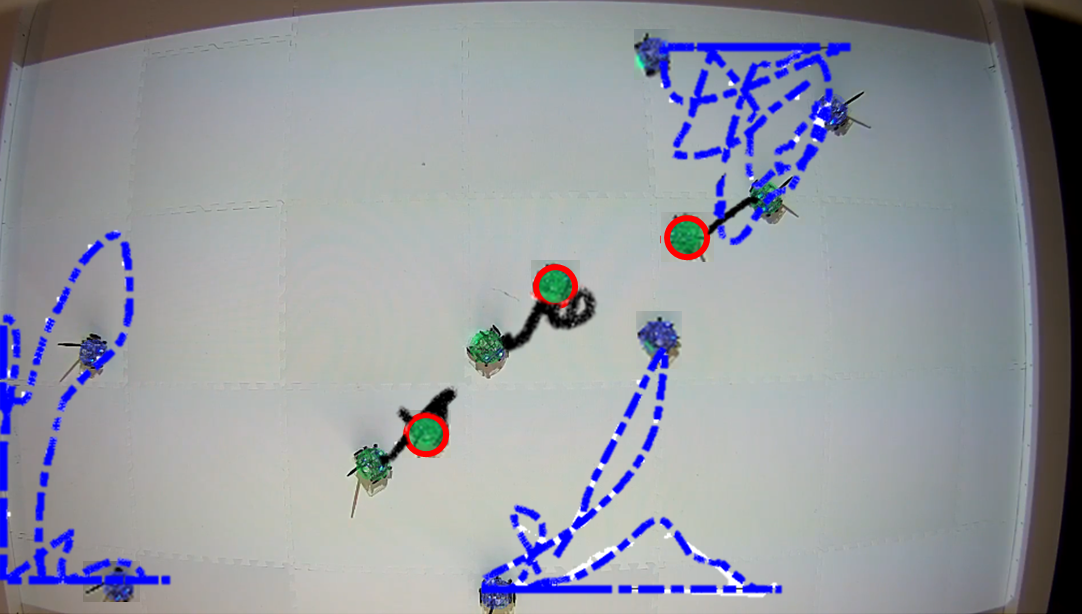}\hspace{0.1cm}
    &\hspace{0.1cm}
    \includegraphics[width=0.21\textwidth]{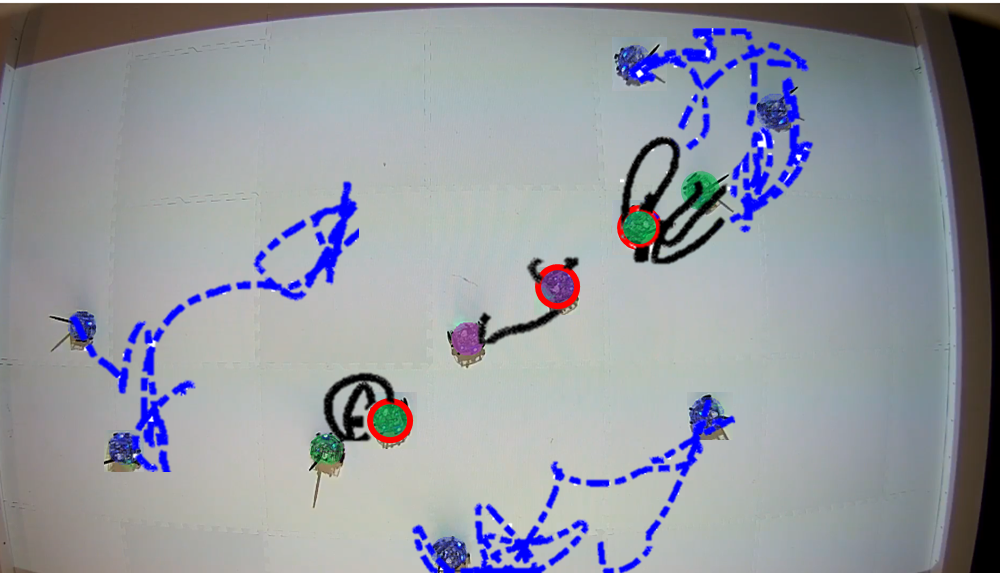}\hspace{0.1cm}
    &\hspace{0.1cm}
    \includegraphics[width=0.22\textwidth]{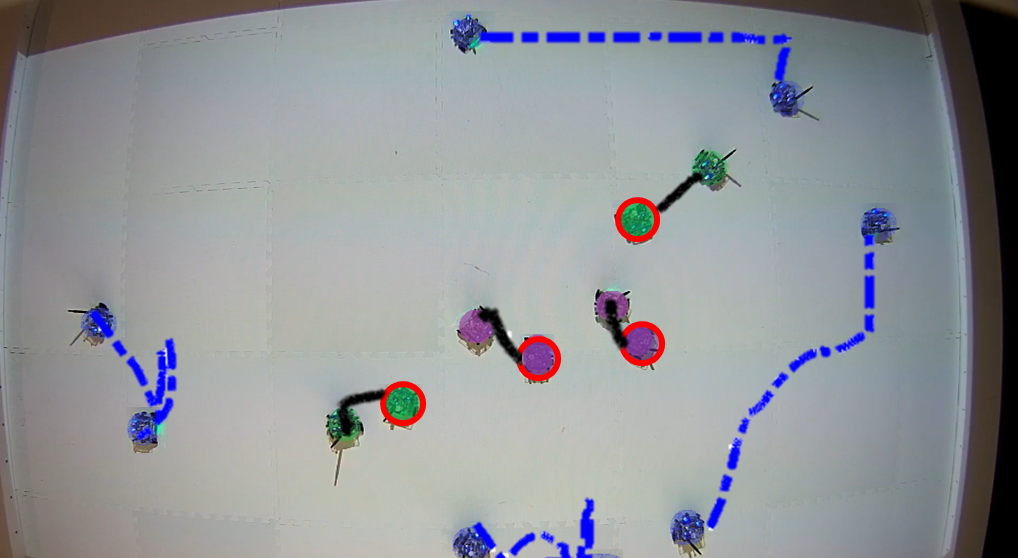}\hspace{0.1cm}
    \end{tabular}
\caption{Results of the herding using Robotarium~\cite{Pickem2017Robotarium}: (a) 3 herders vs 3 Inverse evaders, (b) 3 herders vs 3 Exponential evaders, (c) 3 herders vs 2 Inverse evaders and 1 Exponential evader, (d) 3 herders vs 2 Inverse evaders and 2 Exponential evaders. In green, the evaders (when mixed, Exponential evaders in purple); in blue, the herders; red contours denote desired positions. Video of the experiments is included as supplementary material.
}
\label{fig:second_impression}
\end{figure*}

\begin{figure}[!ht]
    \centering
    \begin{tabular}{cc}
        \includegraphics[width=0.43\columnwidth]{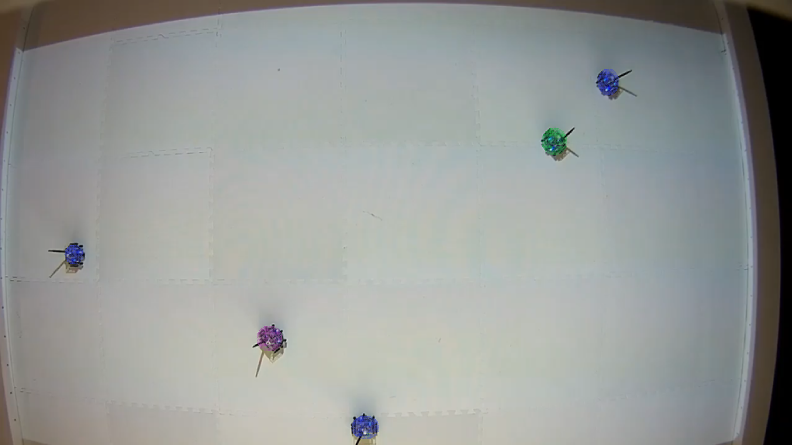}\hspace{-0.2cm}
         &\hspace{0.2cm}  
         \includegraphics[width=0.43\columnwidth]{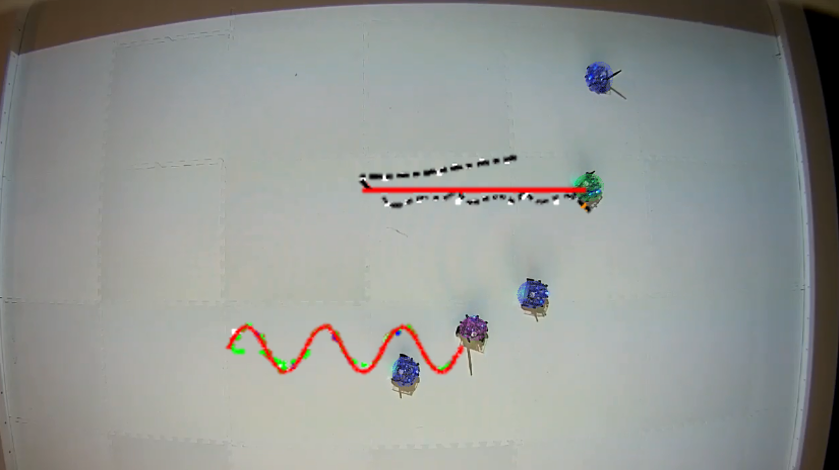}
    \end{tabular}
	\caption{Three herders herding two heterogeneous evaders in a time-varying path. The green evader is Inverse, and the purple evader is Exponential: (left) the initial configuration, (right) the final configuration with the desired and performed trajectories. More details in the supplementary video.}
	\label{fig:TV_robotarium}
\end{figure}

The experiment in Fig.~\ref{fig:first_impression} shows a similar behaviour to the simulations in Section~\ref{sec:simulations}.
The evaders try to evade the herders, going in the direction of lower density of herders. To tackle this, the closest herders surround the evaders to align with the other herders, which move away to modulate the interaction forces. These findings are reinforced with the experiments in Fig.~\ref{fig:second_impression}, where different combinations of number and evaders' dynamics are tested. 
Additionally, the herders successfully herd heterogeneous groups of evaders, Fig.~\ref{fig:second_impression}c, Fig.~\ref{fig:second_impression}d, and in the time-varying experiment in Fig.~\ref{fig:TV_robotarium}. 
Initially, the herders drive the evaders to initial positions and after that herd them in desired trajectories (in red). Despite the space limitations and the complex nonlinear repulsive dynamics, the evaders successfully follow the references.  


\section{Conclusions}\label{sec:conclusion}
This paper has addressed a novel control strategy to solve the herding problem in MRS. This strategy, based on numerical analysis theory, finds suitable herding actions even when, due to the complex nonlinearities of the herd, the control law is given by a set of implicit equations. 
To solve these equations the paper derives two design procedures. The Explicit Design develops an explicit continuous-time expansion of the system, and comes with formal proofs of convergence and fast execution time when the Jacobians are derived analytically; the Implicit Design leverages numerical methods to compute the action in discrete time, being an easy-to-implement approach which can leverage any standard numerical method. Both methods are flexible to the number of evaders and general with respect to their motion model.


\bibliographystyle{IEEEtran}
\bibliography{IEEEabrv,IEEEexample.bib}

\end{document}